\theoremstyle{plain}
\newtheorem{definition}{Definition}
\newtheorem{theorem}{Theorem}
\newtheorem{lemma}{Lemma}
\newtheorem{corollary}{Corollary}
\theoremstyle{definition}
\numberwithin{equation}{section} \numberwithin{theorem}{section}
\numberwithin{lemma}{section} \numberwithin{definition}{section}
\numberwithin{corollary}{section}
\begin{document}
%The heading
\title{Existence and stability of stationary solutions to spatially extended autocatalytic and hypercyclic systems under global regulation and with nonlinear growth rates  }

\author{Alexander S. Bratus'$^1$, Vladimir P. Posvyanskii$^1$, Artem S. Novozhilov$^{2,}$\footnote{Corresponding author: novozhil@ncbi.nlm.nih.gov}\\
\textit{\normalsize $^{1}$Moscow State University of Railway Engineering, Moscow, Russia}\\
\textit{\normalsize $^{2}$National Institutes of Health, 8600
Rockville Pike,Bethesda, MD 20894, USA} }

\date{}

\maketitle

%Abstract goes here
\begin{abstract}
Analytical analysis of spatially extended autocatalytic and hypercyclic systems is presented. It is shown that spatially explicit systems in the form of reaction-diffusion equations with global regulation possess the same major qualitative features as the corresponding local models. In particular, using the introduced notion of the stability in the mean integral sense we prove the competitive exclusion principle for the autocatalytic system and the permanence for the hypercycle system. Existence and stability of stationary solutions are studied. For some parameter values it is proved that stable spatially non-uniform solutions appear.

\paragraph{Keywords:} Autocatalytic system, hypercycle, reaction-diffusion, non-uniform stationary solutions, stability
\end{abstract}

%Start the main text
\section{Introduction and background}
In 1971 Manfred Eigen published a seminal paper on the evolution
of error-prone self-replicating macromolecules
\cite{eigen1971sma}. His theory was expanded significantly later
on, primarily in works of Eigen, Schuster and co-workers
\cite{eigen1989mcc,eigen1988mqs,eigenshuster}. One of the
principal findings was the existence of the \textit{error
threshold}, i.e., the critical mutation rate such that the
equilibrium population of macromolecules (the
\textit{quasispecies} in the terminology of Eigen et al.) cannot
provide conditions for evolution if the fidelity of copying falls
below this critical level. This critical mutation rate depends on
the length of macromolecules and hence puts limits on the amount
of information that can be carried by a given macromolecule. To
improve fidelity one needs longer sequences (e.g., a more
efficient replicase), to have longer sequences one needs better
fidelity, hence the chicken--egg problem. An easy and obvious
solution to this problem is that the early primordial genomes must
have consisted of independently replicating entities, which,
generally speaking, would compete with each other (see, e.g., \cite{Koonin2005} and references therein).

If we consider a simple mathematical description of independent
competing replicators then the usual differential equations for
the growth take the following form:
\begin{equation}\label{intro:0}
    \frac{\dot{v}_i}{v_i}=a_iv_i^{p}-f_1(t),\quad i=1,\ldots,n,
\end{equation}
where $v_i=v_i(t)$ is the concentration of the $i$-th type of
macromolecules, $a_i$ is the rate of replicating, $p>0$ is the
degree of auto-catalysis, and $f_1(t)$ is the term which is
necessary to keep the total concentration constant, this term
depends only on $t$ and not on the index, $f_1(t)=\sum_{i=1}^n a_iv_i^p
v_i$ in the present case; easy to see that this is equivalent to the condition $\sum_{i=1}^n v_i=1$. In the case if $p\neq 1$ we have system with non-linear growth rates, which model different coupling strength of the various components, for the discussion of such growth rates see, e.g., \cite{mccabe1999etw,metcalf1994owf}. Hereinbelow we consider mainly $p>0$ (or even, $p>1$) but remark that $p=0$ gives
the exponential growth, $p=1$ gives the standard hyperbolic growth
(autocatalysis), and for $p<0$ the parabolic growth occurs
\cite{szathmary1997rrf}. It is straightforward to show that for
$p\geq 0$ only one replicator present at $t\to\infty$, the
competition winds up in the \textit{competitive exclusion} of all
but one types, i.e., the genome composed of independently
replicating entities is not vital.

\begin{figure}[!tb]
\centering
\includegraphics[width=0.4\textwidth]{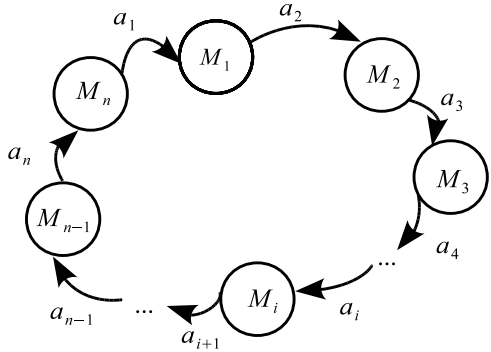}
\caption{The hypercycle \cite{eigenshuster}. Each macromolecule $(M_i)$ helps to replicate another one, $M_{i+1}$, $M_n$ macromolecule promotes the replication of $M_1$ closing the loop; $a_i,\,i=1,\ldots,n$ denote reaction rates.}\label{Fig:intro:1}
\end{figure}
To resolve this situation Eigen and Schuster \cite{eigenshuster}
suggested a concept of the \textit{hypercycle}, a group of
self-replicating macromolecules that catalyze each other in a cyclic
manner: the first type helps the second one, the second type helps the
third, etc, and the last type helps the first one closing the
loop (see Fig. \ref{Fig:intro:1}). An analogue to system \eqref{intro:0} can be written in the
form
\begin{equation}\label{intro:1}
    \frac{\dot{v}_i}{v_i}=a_iv_{i-1}^{p}-f_2(t),\quad i=1,\ldots,n,
\end{equation}
where index $0$ coincides with $n$, $f_2(t)=\sum_{i=1}^n a_iv_{i-1}^p v_i$.
For $p=1$ we obtain the standard hypercycle
model~\cite{hofbauer1998ega}. It is known that \eqref{intro:1} is
\textit{permanent}, i.e., all the concentrations are separated from
zero, and hence different replicators coexist in this model. More
exactly, for short hypercycles, $n=2,3,4$, the internal equilibrium
is globally stable, for longer hypercycles, $n>4$, a globally stable
limit cycle appears~\cite{hofbauer1991sps}.

The problem with the hypercycle model \eqref{intro:1} is its
vulnerability to the invasion of parasites \cite{smith1979hao}.

We remark that models \eqref{intro:0} and \eqref{intro:1} are systems
of ordinary differential equation (ODEs), i.e., they are
mean-field models. As a solution to the parasite invasion problem
it was suggested that heterogeneous population structure can
strengthen persistence of the system. One of the suggested
solution was spatially explicit models
\cite{boerlijst1990ssa,boerlijst1991sws,cronhjort1994hvp}, see
also \cite{boerlijst2000sas,cronhjort2000ibr} for reviews of the
pertinent work. Two major approaches to spatially explicit models
are reaction-diffusion equations and cellular automata models, and
they both were considered in the cited works. Which was lacking,
however, is an analytical treatment of the resulting systems,
because in both cases the researchers have resorted to extensive
numerical simulations. An only notable exception to our knowledge
is \cite{weinberger1991ssa}, where some of the models with
explicit space are analyzed analytically. An interest in
cluster-like solutions of reaction-diffusion systems resulted in
the analysis of spatially explicit hypercycle in infinite space
\cite{wei2000tdr,wei2001hsn}.

Note that models \eqref{intro:0} and \eqref{intro:1} are a special case of the general replicator equation \cite{hofbauer2003egd}, for which several approaches are known to incorporate an explicit spatial structure, albeit there is no universally accepted way of incorporating dispersal effects. The solution to the problem with equal diffusion rates is straightforward, in this case we, following ecological approach, can just add the Laplace operator to the right hand sides of \eqref{intro:0} or \eqref{intro:1}. This was used, e.g., in the classical paper by Fisher \cite{fisher1937waa} to model the effect of the spatial structure on the invasion properties of an advantageous gene; this approach later was generalized by Hadeler \cite{hadeler1981dfs}. However, for the primordial world, it would be a too stringent an assumption to have all the diffusion coefficients equal. To overcome this difficulty, Vickers et al. introduced a special form of the population regulation to allow for different diffusion rates \cite{cressman1997sad, hutson1995sst,vickers1989spa}, now in the subject area of evolutionary game dynamics. In these works a nonlinear term is used that provides \textit{local} regulation of the populations under question, although no particular biological mechanism is known that lets individuals adapt their per capita birth and death rates to local circumstances \cite{ferriere2000ads}. In our view, it is more natural to assume the \textit{global} regulation of the populations, hence following along the lines of thought that brought to the models \eqref{intro:0} and \eqref{intro:1}. Mathematically it means that we assume that the total populations satisfy the following condition
$$
\sum_{i=1}^n\int_{\Omega}v_i(t,x)\,dx=1,
$$
where $x\in\Omega$ is a spatial variable now. This approach was first used in \cite{weinberger1991ssa}. Which is important here is that this approach allows to obtain some analytical insights of the systems \cite{bratus2006ssc}.

In this text our goal is to present an analytical treatment of the models of prebiotic macromolecules with self- and hypercyclic catalysis with an explicit spatial structure and global population regulation in the form of
reaction-diffusion equations.

\section{The mathematical models}
Let $\Omega$ be a bounded domain, $\Omega\subset\mathbb R^{m}$,
$m=1,2,3$, with a piecewise-smooth boundary $\Gamma$. The spatially explicit
analogue to \eqref{intro:0} is given by the following
reaction-diffusion system
\begin{equation}\label{model:1}
    \partial_t v_i=v_i(a_iv_i^{p}-f_1(t)
    )+d_i\Delta v_i,\quad i=1,\ldots,n,\quad t>0.
\end{equation}
Here $v_i=v_i(x,t),\,x\in\Omega,t>0$, $\partial_t\equiv \frac{\partial}{\partial t}$, $\Delta$ is the Laplace
operator, in the Cartesian coordinates
$\Delta=\sum_{k=1}^m\frac{\partial^2}{\partial x_i^2}$. The initial
conditions are $v_i(x,0)=\varphi_i(x)$, $p>0$ (although we note, that in each particular case we shall specify admissible values of $p$), and the form of
$f_1(t)$ will be determined later.

A slight modification of \eqref{model:1} gives the hypercyclic
system
\begin{equation}\label{model:2}
    \partial_t v_i=v_i(a_iv_{i-1}^{p}-f_2(t)
    )+d_i\Delta v_i,\quad i=1,\ldots,n,\quad t>0,
\end{equation}
where $v_0\equiv v_n$.

In both problems \eqref{model:1} and \eqref{model:2} the functions
$v_i(x,t)$ are assumed to be nonnegative, since they represent
\textit{relative} concentrations of different macromolecules.

It is natural to assume that we consider closed systems (see also
\cite{weinberger1991ssa}), i.e., we have the boundary conditions
\begin{equation}\label{bnd:1}
    \left.\frac{\partial v_i}{\partial
    \textbf{n}}\right|_{x\in\Gamma}=0,\quad i=1,\ldots,n,
\end{equation}
where $\textbf{n}$ is the normal vector to the boundary $\Gamma$.

It is assumed that the global regulation of the total
concentration of macromolecules occurs in the system such that
\begin{equation}\label{2:1}
\sum_{i=1}^n \int_{\Omega}v_i(x,t)\,dx=1
\end{equation}
for any time moment $t$. This condition is an analogous condition
for the total concentration of replicators in the finite-dimensional
case \cite{hofbauer1998ega}. From the boundary condition \eqref{bnd:1}
and the integral invariant \eqref{2:1} the expressions for the
functions $f_1(t)$ and $f_2(t)$ follow:
\begin{equation}\label{inv:1}
    f_1(t)=\sum_{i=1}^n
    \int_{\Omega}a_iv_i^{p+1}(x,t)\,dx
\end{equation}
and
\begin{equation}\label{inv:2}
    f_2(t)=\sum_{i=1}^n
    \int_{\Omega}a_iv_{i-1}^p(x,t)v_i(x,t)\,dx.
\end{equation}
Finally we have a mixed problem for a system of semilinear parabolic
equations with the integral invariant \eqref{2:1} and functionals
\eqref{inv:1} and \eqref{inv:2}.

Suppose that for any fixed $x\in\Omega$ each function $v_i(x,t)$ is
differentiable with respect to variable $t$, and belongs to the
space $H_{p+1}^1(\Omega)$ as the function of $x$ for any fixed
$t>0$. Here $H_{p+1}^1$ is the space of functions with the norm
$$
\|u(x)\|_{H_{p+1}^1}=\left[\int_{\Omega}|u(x)|^{p+1}\,dx\right]^{\frac{1}{p+1}}+\left[\int_{\Omega}\sum_{k=1}^m\left|\frac{\partial
u}{\partial x_k}\right|^{2}\,dx\right]^{\frac{1}{{2}}}.
$$
Note that if $p\geq 1$ then $H_{p+1}^1(\Omega)\subseteq
H_{2}^1(\Omega)$, where $H_{2}^1(\Omega)$ is the Sobolev space of
square-integrable functions for which their first partial
derivatives are also square-integrable \cite{mikhlin1964vmm}.

Without loss of generality we shall assume further that volume of
the domain $\Omega$ is equal 1: $|\Omega|=1$.

Our main goal is to analyze existence and stability of the steady
state solutions to \eqref{model:1} and \eqref{model:2}. The steady
state solutions are given by the solutions to the following
elliptic problems:
\begin{equation}\label{model:1:1}
    d_i\Delta u_i+u_i(a_iu_i^{p}-\bar{f}_1)=0,\quad
    i=1,\ldots,n,
\end{equation}
and
\begin{equation}\label{model:1:2}
    d_i\Delta u_i+u_i(a_iu_{i-1}^{p}-\bar{f}_2)=0,\quad
    i=1,\ldots,n,\quad u_0\equiv u_n,
\end{equation}
with the boundary conditions $\partial_{\textbf{n}}u_i=0$ on
$\Gamma$; $u_i(x)\in H_{p+1}^1(\Omega)$. The integral invariant
\eqref{2:1} now reads
\begin{equation}\label{2:2}
    \sum_{i=1}^n\int_{\Omega}u_i(x)\,dx=1,
\end{equation}
the values of $\bar{f}_1$ and $\bar{f}_2$ are constant:
\begin{equation}\label{inv:1:1}
    \bar{f}_1=\sum_{i=1}^n
    \int_{\Omega}a_iu_i^{p+1}(x)\,dx
\end{equation}
and
\begin{equation}\label{inv:2:1}
    \bar{f}_2=\sum_{i=1}^n
    \int_{\Omega}a_iu_{i-1}^p(x)u_i(x)\,dx.
\end{equation}

If it is assumed that $d_1=d_2=\ldots =d_n=0$ then the equilibrium
points of \eqref{intro:0} and \eqref{intro:1} coincide with the
steady state solutions to \eqref{model:1} and \eqref{model:2}.
These solutions are spatially homogeneous. The converse is also
true: the spatially homogeneous equilibria of systems
\eqref{model:1} and \eqref{model:2} are fixed points of the
dynamical systems \eqref{intro:0} and \eqref{intro:1}
respectively.

The coordinates of these spatially homogeneous solutions are
straightforward to write down. Let $\beta_i=(a_i)^{-\frac{1}{p}}$ and consider the sum $\beta=\sum
\beta_i$, where the index of summation is determined later. All
spatially homogeneous solutions to \eqref{model:1:1} are given by
\begin{equation*}
    \begin{split}
    P_1 & =\frac{1}{\beta}(\beta_1,\,\beta_2,\ldots,\beta_n),\\
      Q_j  &
      =\frac{1}{\beta}(\beta_1,\ldots,\beta_{j-1},0,\beta_{j+1},\ldots,\beta_n),\\
      Q_{jk}  &
      =\frac{1}{\beta}(\beta_1,\ldots,\beta_{j-1},0,\beta_{j+1},\ldots,\ldots,\beta_{k-1},0,\beta_{k+1},\ldots,\beta_n),\\
      &\ldots
\end{split}
\end{equation*}
ending with the vertices $R_i=(0,\ldots,0,1,0,\ldots,0)$ (unity at
the $i$-th place) of the simplex $\sum_{i=1}^n
u_i=1,\,u_i\geq 0$; for each steady state $\beta$ in obtained
by summing through all non-zero elements in the vector.

The spatially homogeneous stationary solution to \eqref{model:1:2}
is given by
$$
P_2=\frac{1}{\beta}(\beta_2,\,\beta_3,\ldots,\beta_n,\,\beta_1).
$$
\section{Stability of spatially homogeneous equilibria}
Let $u^0=(u_1^0,\ldots,u_n^0)$ be a spatially homogeneous solution
to system \eqref{model:1}. In the usual way we assume that the
Cauchy data are perturbed
$$
\varphi_i(x)=u_i^0+w_i(x),\quad i=1,\ldots,n.
$$
Here $w_i(x)\in H_{p+1}^1(\Omega)$. Inasmuch as we have
$$
\sum_{i=1}^nu_i^0=1,
$$
then from \eqref{2:1} it follows that
\begin{equation}\label{3:1}
    \sum_{i=1}^n \int_{\Omega} w_i(x)\,dx=0.
\end{equation}

Consider the following eigenvalue problem
\begin{equation}\label{eigenv:pr}
\Delta \psi(x)+\lambda\psi(x)=0,\quad
x\in\Omega,\quad\partial_{\textbf{n}}\psi(x)|_{x\in\Gamma}=0.
\end{equation}
The system of eigenfunctions of this problem $\psi_0(x)=1$,
$\{\psi_i(x)\}_{i=1}^{\infty}$ forms a complete system in the Sobolev
space $H_1^1(\Omega)$ \cite{mikhlin1964vmm} such that
$$
\langle\psi_i(x),\psi_j(x)\rangle=\int_{\Omega}\psi_i(x)\psi_j(x)\,dx=\delta_{ij},
$$
where $\delta_{ij}$ is the Kronecker symbol. The corresponding
eigenvalues satisfy the condition
$$
0=\lambda_0<\lambda_1\leq\lambda_2\leq\ldots\leq\lambda_i\leq\ldots,\quad
\lim_{i\to\infty}\lambda_i=\infty.
$$
Hence for $p\geq1$ we assume that $w_i(x)$ can be represented
as
\begin{equation}\label{3:2}
    w_i(x)=\sum_{j=0}^{\infty}c_j^i\phi_j(x),\quad i=1,\ldots,n,
\end{equation}
where $c_j^i$ are constant.

Denote $H_{\delta}$ the set of functions $w(x)\in H_1^1(\Omega)$
such that $\|w\|_{H_1^1}\leq \delta$, where $\delta>0$.
\begin{theorem}\label{th3:1}
For $p\geq1$ all spatially homogeneous stationary solutions to
\eqref{model:1} are unstable with respect to any perturbation from
the set $H_\delta$ if
\begin{equation}\label{3:3}
    0<\frac{d_i}{a_i}<\frac{p}{\lambda_1},\quad i=1,\ldots,n.
\end{equation}

The solutions $R_i=(0,\ldots,0,1,0,\ldots,0)$, $i=1,\ldots,n$ are
stable when
\begin{equation}\label{3:4}
\frac{d_i}{a_i}>\frac{p}{\lambda_1}\,.
\end{equation}
Here $\lambda_1$ is the first non-zero eigenvalue of the problem
\eqref{eigenv:pr}.
\end{theorem}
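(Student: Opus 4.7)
The plan is to linearize \eqref{model:1} around any spatially homogeneous equilibrium $u^0$, expand the perturbation $w_i = v_i - u_i^0$ in the Neumann eigenbasis $\{\psi_k\}$ from \eqref{eigenv:pr}, and then read off the stability of each Fourier mode separately.

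Setting $S = \{j : u_j^0 > 0\}$ and using $a_j(u_j^0)^p = \bar f_1$ on $S$, writing $f_1(t) = \bar f_1 + \delta f_1$ with $\delta f_1 = (p+1)\bar f_1 \sum_{j \in S} \int_\Omega w_j\,dx$, and discarding the $O(|w|^{p+1})$ remainder (which is legitimate because $p\geq 1$), the linearized equations read
\begin{equation*}
\partial_t w_i = d_i \Delta w_i +
\begin{cases}
p\bar f_1\,w_i - u_i^0\,\delta f_1, & i \in S,\\
-\bar f_1\,w_i, & i \notin S.
\end{cases}
\end{equation*}
Inserting $w_i(x,t) = \sum_{k \geq 0} c_k^i(t)\psi_k(x)$ and using $\int_\Omega \psi_k\,dx = 0$ for $k \geq 1$ to kill $\delta f_1$ on every non-zero mode, the coefficients satisfy
\begin{equation*}
\dot c_k^i =
\begin{cases}
(p\bar f_1 - d_i\lambda_k)\,c_k^i, & i \in S,\\
-(\bar f_1 + d_i\lambda_k)\,c_k^i, & i \notin S,
\end{cases}\qquad k \geq 1,
\end{equation*}
while the $k=0$ block is precisely the Jacobian of the ODE system \eqref{intro:0} at $u^0$, restricted to the hyperplane $\sum_i c_0^i = 0$ forced by \eqref{3:1}.

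I would then split cases by $|S|$. For a non-vertex equilibrium ($|S| \geq 2$, i.e.\ one of the $P_1$ or $Q_{jk\cdots}$ points), a direct computation shows that the ODE Jacobian decomposes as $-\bar f_1 I$ off $S$ and $p\bar f_1 I - (p+1)\bar f_1\,u^0|_S \mathbf 1_S^{\top}$ on $S$; its restriction to $\{c : \sum_{i \in S} c_i = 0,\ c_i = 0 \text{ for } i \notin S\}$ equals $p\bar f_1 > 0$ times the identity, and this subspace is non-trivial and compatible with \eqref{3:1}, so the zero mode alone already produces instability, without invoking \eqref{3:3}. For a vertex $u^0 = R_\ell$ one has $\bar f_1 = a_\ell$, and the only growth-capable direction is the $\ell$-th component of a mode $k \geq 1$, with growth rate $p a_\ell - d_\ell \lambda_k$; this is positive at $k=1$ iff $d_\ell/a_\ell < p/\lambda_1$, which is part of \eqref{3:3}, giving instability. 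Conversely, under \eqref{3:4} one has $p a_\ell - d_\ell \lambda_k \leq p a_\ell - d_\ell \lambda_1 < 0$ for all $k \geq 1$, all off-$\ell$ components decay at rate at least $a_\ell$, and the zero mode also decays (at $R_\ell$ it reduces to $\dot c_0^\ell = -a_\ell c_0^\ell$), giving spectral stability of $R_\ell$.

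The main obstacle is the bookkeeping across three decouplings: $i \in S$ vs.\ $i \notin S$, $k = 0$ vs.\ $k \geq 1$, and the global constraint \eqref{3:1}. Concretely, I must verify that the unstable subspace produced by the zero mode in the non-vertex case is not annihilated by the constraint (it is not, since the constraint reduces to $\sum_{i \in S} c_i = 0$ on perturbations supported in $S$), and that at a vertex the forced component $\delta f_1 \neq 0$ in the zero mode does not spoil stability. Passage from the spectral statements to Lyapunov (in)stability in the $H_\delta$-norm is then standard, exploiting analyticity of the heat semigroup on the Sobolev scale and the embedding $H_{p+1}^1 \subseteq H_2^1$ recalled earlier.
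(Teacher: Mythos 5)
Your proposal is correct and follows essentially the same route as the paper: linearize about the homogeneous equilibrium, expand in the Neumann eigenbasis of \eqref{eigenv:pr}, analyze the resulting mode-wise ODEs (zero mode for $P_1,Q_j,\ldots$, first nonzero mode of the vertex component for $R_i$, with the zero mode constrained by \eqref{3:1}). The only difference is a refinement of detail: you retain the nonlocal perturbation $\delta f_1$ explicitly and show it enters only the zero mode (giving $\dot c_0^\ell=-a_\ell c_0^\ell$ at a vertex), whereas the paper omits it from its linearization \eqref{3:7} and instead controls that coefficient through the constraint \eqref{3:9}.
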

\begin{proof}
Let $W(x,t)=(W_1(x,t),\ldots,W_n(x,t))$ be a vector-function
belonging to $H_\delta$ for any fixed $t$. Using \eqref{3:1} and
\eqref{eigenv:pr} we can seek the solution to \eqref{model:1} in the
following form:
\begin{equation}\label{3:6}
    v_i(x,t)=u_i^0+W_i(x,t),\quad
    W_i(x,t)=\sum_{j=0}^nc_j^i(t)\psi_j(x).
\end{equation}
Substituting \eqref{3:6} into \eqref{model:1} and retaining in the
usual way only linear terms with respect to $W_i$ we obtain the
following equations:
\begin{equation}\label{3:7}
    \partial_tW_i=(p+1)a_i(u_i^0)^pW_i-\bar{f}_1W_i+d_i\Delta
    W_i,\quad W_i(x,0)=\varphi_i(x)\in H_\delta,\quad \partial_{\textbf{n}}
    W_i=0\mbox{ on }\Gamma.
\end{equation}

Consider first the case $u^0=P_1$. Direct calculations show that
$\bar{f}_1=\beta^{-p}$.

Multiplying equations \eqref{3:7} one after another by the functions
$\psi_j$ and integrating with respect to $x\in\Omega$ we obtain the following system of
ordinary differential equations:
\begin{equation}\label{3:8}
    \frac{dc_j^i(t)}{dt}=c_j^i(t)(a_ip\beta^{-p}-d_i\lambda_j),\quad
    i=1,\ldots,n,\quad j=0,1,2,\ldots
\end{equation}
For $j=0$ one has
$$
c_0^i(t)=c_0^i(0)\exp(a_ip\beta^{-p}t),
$$
therefore, $c_0^i(t)\to\infty$ as $t\to\infty$, which implies that
$P_1$ is unstable.

Using the same approach it is straightforward to show that
$Q_j,\,Q_{jk},\ldots$ are also unstable.

Now we deal with $R_i$. First note that from \eqref{2:2} it follows
that
\begin{equation}\label{3:9}
    \sum_{i=1}^nc_0^i(t)=0.
\end{equation}

For $R_i$ we have
$$
\frac{dc_j^i(t)}{dt}=-c_j^i(t)(a_i+d_i\lambda_j),\quad\mbox{for }j\neq
i
$$
and
$$
\frac{dc_j^i(t)}{dt}=c_j^i(t)(a_i-d_i\lambda_j),\quad\mbox{for }j=i.
$$
Therefore, for $j\neq i$, $c_i^j(t)\to0$ when $t\to\infty$. Taking
into account \eqref{3:9} we obtain that $c_0^i(t)\to0$. If $i=j$ and
\eqref{3:3} holds then $c_j^i(t)\to\infty$, if \eqref{3:4} holds then
$c_j^i(t)\to 0$, which proves the theorem.
\end{proof}

\begin{theorem}
If $p\geq 1$ then spatially homogeneous stationary solution
$P_2$ to system \eqref{model:2} is unstable with respect to any
perturbations from the set $H_{\delta}$ when
\begin{equation}\label{3:10}
    \prod_{i=1}^n\frac{d_i}{a_i}<\left(\frac{p}{\beta^p\lambda_1}\right)^n.
\end{equation}
\end{theorem}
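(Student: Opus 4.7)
The plan is to mirror Theorem \ref{th3:1}: linearise \eqref{model:2} at the spatially uniform equilibrium $P_2$, expand the perturbation in the Neumann eigenbasis $\{\psi_j\}$ of \eqref{eigenv:pr}, and reduce the stability question, mode by mode, to a finite-dimensional spectral problem. The new feature compared with the autocatalytic case of Theorem \ref{th3:1} is that the linearised dynamics couple the components cyclically, so the per-mode matrix is no longer diagonal and its characteristic polynomial must be computed in closed form.

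Concretely, I would substitute $v_i(x,t) = u_i^0 + W_i(x,t)$ with $u_i^0 = \beta_{i+1}/\beta$, write $W_i = \sum_{j\ge 0} c_j^i(t)\psi_j(x)$, and use the equilibrium identity $a_i(u_{i-1}^0)^p = \bar f_2 = \beta^{-p}$ to cancel the diagonal terms. What remains is $\partial_t W_i = b_i W_{i-1} - u_i^0\,\delta f_2 + d_i\,\Delta W_i$, where $b_i := p\,a_i\,u_i^0\,(u_{i-1}^0)^{p-1}$ and $\delta f_2$ is spatially constant. Since $\int_\Omega \psi_j\,dx = 0$ for $j\ge 1$, the global-regulation contribution drops out of every nonconstant mode, and projecting onto $\psi_j$ yields, for each $j\ge 1$, the cyclic $n$-dimensional linear ODE
\begin{equation*}
\frac{dc_j^i}{dt} = b_i\,c_j^{i-1}(t) - d_i\lambda_j\,c_j^i(t),\qquad c_j^0 \equiv c_j^n.
\end{equation*}
A cyclic telescoping using $a_i = \beta_i^{-p}$ and $u_i^0/u_{i-1}^0 = \beta_{i+1}/\beta_i$ gives the key identity $\prod_{i=1}^n b_i = (p/\beta^p)^n$.

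The characteristic polynomial of the matrix on the right above can be written down explicitly: in the determinant expansion only the identity permutation and the length-$n$ cycle contribute nonzero terms, so after cancelling signs the eigenvalues $\nu$ satisfy
\begin{equation*}
\prod_{i=1}^n (\nu + d_i\lambda_j) \;=\; \prod_{i=1}^n b_i \;=\; \left(\frac{p}{\beta^p}\right)^n.
\end{equation*}
Specialising to $j=1$, the left-hand side is a real, continuous, strictly increasing function of $\nu \ge 0$, running from $\lambda_1^n\prod_i d_i$ at $\nu = 0$ up to $+\infty$. By the intermediate value theorem there is a real root $\nu_\star > 0$ precisely when the value at $\nu = 0$ lies strictly below $(p/\beta^p)^n$, which is the content of hypothesis \eqref{3:10}. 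Such a $\nu_\star$ produces an exponentially growing direction $c_1^i(t)\propto e^{\nu_\star t}$ of the linearised dynamics along the nonconstant eigenfunction $\psi_1$, contradicting stability of $P_2$ in $H_\delta$.

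The main obstacle is the closed-form evaluation of the characteristic polynomial of the cyclic-plus-diagonal matrix together with the clean telescoping $\prod_i b_i = (p/\beta^p)^n$; once these are in hand, the existence of $\nu_\star$ is a one-line application of the intermediate value theorem. A secondary subtlety is that the growing mode must be compatible with the integral invariant \eqref{2:1}: this is automatic, because $\psi_1$ has zero mean, so $\sum_i \int_\Omega W_i\,dx$ remains zero along the unstable direction and the perturbation stays in $H_\delta$.
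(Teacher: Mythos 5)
Your linearisation, eigenmode reduction and characteristic-polynomial computation follow essentially the same route as the paper, which passes from \eqref{model:2} to the cyclic per-mode system \eqref{3:12} and then invokes Routh--Hurwitz; your telescoped identity $\prod_{i=1}^n b_i=(p/\beta^p)^n$ is correct and is consistent with \eqref{3:12}, whose off-diagonal coefficients $\frac{p}{\beta^p}\frac{a_i}{a_{i+1}}$ have the same cyclic product (your $b_i=\frac{p}{\beta^p}\bigl(a_i/a_{i+1}\bigr)^{1/p}$ is in fact the correct coefficient for general $p$). The gap is the final identification: you assert that ``the value at $\nu=0$ lies strictly below $(p/\beta^p)^n$'' is ``the content of hypothesis \eqref{3:10}'', and it is not. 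For $j=1$ the value at $\nu=0$ is $\lambda_1^n\prod_i d_i$, so your intermediate-value argument produces a positive root exactly when $\prod_i d_i<\left(\frac{p}{\beta^p\lambda_1}\right)^n$, whereas \eqref{3:10} reads $\prod_i\frac{d_i}{a_i}<\left(\frac{p}{\beta^p\lambda_1}\right)^n$; the two differ by the factor $\prod_i a_i$. Hypothesis \eqref{3:10} implies the inequality you actually need only if $\prod_i a_i\le 1$; when $\prod_i a_i>1$ there are parameters satisfying \eqref{3:10} with $\lambda_1^n\prod_i d_i\ge (p/\beta^p)^n$, your constant-term criterion is then silent (higher modes $j\ge 2$ only make it harder since $\lambda_j\ge\lambda_1$), and the theorem as stated is not established by your argument.

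Note also that this mismatch cannot be repaired by a more careful Routh--Hurwitz analysis of the same mode system: the per-mode characteristic polynomial $\prod_i(\nu+d_i\lambda_j)-(p/\beta^p)^n$ depends on the rates $a_i$ only through $\beta$, so no spectral condition extracted from \eqref{3:12} can produce the separate factor $\prod_i a_i$ appearing in \eqref{3:10}. (The paper itself is not consistent on this point: its Remark~4.3 manipulates the inverse of \eqref{3:10} as if that inequality read $\prod_i d_i\ge\left(\frac{p}{\beta^p\pi^2}\right)^n$, which is precisely the threshold your computation yields with $\lambda_1=\pi^2$.) So your derivation is sound up to the last step, but to close the proof you must either prove the statement with the hypothesis your linearisation actually supports, namely $\prod_i d_i<\left(\frac{p}{\beta^p\lambda_1}\right)^n$, or add the extra assumption $\prod_i a_i\le1$ under which \eqref{3:10} implies it; declaring the two inequalities identical is the step that fails.
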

\begin{proof}
As before we will look for a solution to \eqref{model:2} in the form
\eqref{3:6}. After substituting \eqref{3:6} into \eqref{model:2},
multiplying by $\psi_j$ and integrating, we obtain the following
system of ordinary differential equations for $c_j^i(t)$:
\begin{equation}\label{3:12}
    \frac{dc_j^i(t)}{dt}=\frac{p}{\beta^p}\frac{a_i}{a_{i+1}}c_{j}^{i-1}(t)-d_i\lambda_jc_j^i(t),\quad
    i=1,\ldots n,\,n+1\equiv 1,\,0\equiv n,\quad j=0,1,2,\ldots
\end{equation}
Applying the Routh--Hurwitz criterion we obtain that the solutions
to \eqref{3:12} go to $\infty$ if \eqref{3:10} holds, which implies
instability of $P_2$.
\end{proof}
\paragraph{Remark 3.1.} Inverse inequality to \eqref{3:10} provides
stability of $P_2$ only in the cases $n=2,3,4$. Actually, for $j=0$
we have that \eqref{3:12} takes the form
\begin{equation*}
    \frac{dc_0^i(t)}{dt}=\frac{p}{\beta^p}\frac{a_i}{a_{i+1}}c_{0}^{i-1}(t),\quad
    i=1,\ldots, n.
\end{equation*}
All eigenvalues can be easily evaluated because the corresponding
matrix is circular:
$$
\mu_j=\frac{p}{\beta^p}\rho_j,\quad j=0,\ldots,n-1,
$$
where $\rho_j$ is the $j$-th root of the equation $\rho^n=1$. The
eigenvector $(1,1,\ldots,1)$ does not satisfy \eqref{3:9}, therefore we exclude it from the consideration. When
$n=2,3$ all eigenvalues have negative real part, in the case $n=4$
$P_2$ also will be stable \cite{hofbauer1998ega}. For $n\geq 5$
there is at least one eigenvalue with positive real part, which
proves the claim that $P_2$ is unstable when $n\geq 5$.

\section{Existence of spatially nonuniform stationary solutions to  systems \eqref{model:1} and \eqref{model:2} in one-dimensional case}
Here we will prove that when the space is one dimensional,
$\Omega=[0,1],\,\Delta=\partial_x$, the models \eqref{model:1} and
\eqref{model:2} possess non-uniform stationary solutions under some
additional conditions. The boundary conditions now take the form
$\partial_x v_i(0,t)=\partial_x v_i(1,t)=0$.

\begin{theorem}\label{th4:1}
For $0<p\leq 2$ a spatially non-uniform stationary solution to
\eqref{model:1} exists if the following inequality holds
\begin{equation}\label{4:1}
    \sum_{i=1}^n\left(\frac{d_i}{a_i}\right)^{\frac{1}{p}}<\left(\frac{p}{\pi^2}\right)^{\frac{1}{p}}.
\end{equation}
\end{theorem}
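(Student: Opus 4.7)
The plan is to construct a non-uniform stationary solution explicitly via a self-similar rescaling that decouples the system \eqref{model:1:1} and reduces the whole problem to a single scalar self-consistency equation for the unknown constant $\bar f_1$.

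First, I would look for each component of a candidate stationary solution in the form $u_i(x)=\alpha_i w_i(L_i x)$ with
\[
\alpha_i=(\bar f_1/a_i)^{1/p},\qquad L_i=\sqrt{\bar f_1/d_i}.
\]
Direct substitution into \eqref{model:1:1} collapses the coupling: every profile $w_i$ satisfies the universal autonomous ODE $w_i''=w_i-w_i^{p+1}$ on $[0,L_i]$ with Neumann boundary conditions $w_i'(0)=w_i'(L_i)=0$, so that the $n$ species now interact only through the single scalar $\bar f_1$.

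Second, I would analyze this scalar ODE in the phase plane. In conservative form with potential $\Phi(w)=w^{p+2}/(p+2)-w^2/2$, the equation has a saddle at $w=0$ and a non-degenerate center at $w=1$, surrounded by a one-parameter family of periodic orbits bounded by a homoclinic loop. The half-period of these orbits depends continuously on the amplitude, ranging from $\pi/\sqrt p$ (linearized limit at $w=1$) to $+\infty$ (homoclinic limit at the saddle), so for each $L_i>\pi/\sqrt p$ there is a unique strictly positive, monotone, non-constant profile $w_i(L_i;\cdot)$ satisfying the Neumann BVP with exactly one half-wave. Integrating the ODE over $[0,L_i]$ and using the boundary conditions immediately gives the identity $\int_0^{L_i}w_i\,dy=\int_0^{L_i}w_i^{p+1}\,dy$, which makes the two a priori separate normalizations $\sum_i\int_0^1 u_i\,dx=1$ and $\bar f_1=\sum_i a_i\int_0^1 u_i^{p+1}\,dx$ collapse to the single scalar equation
\[
\Psi(\bar f_1):=\sum_{i=1}^n\frac{\alpha_i}{L_i}\,M_i(L_i)=1,\qquad M_i(L):=\int_0^L w_i(L;y)\,dy.
\]

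Third, I would establish existence of a root of $\Psi(\bar f_1)=1$ for which at least one $w_i$ is genuinely non-constant. Along the trivial branch $w_i\equiv 1$ one has $M_i\equiv L_i$, so $\Psi(\bar f_1)=\bar f_1^{1/p}\sum_i a_i^{-1/p}$ and the unique root is the homogeneous solution $P_1$. A small-amplitude expansion at the bifurcation point $L_i=\pi/\sqrt p$ along the non-trivial branch shows $M_i/L_i<1$, so $\Psi$ is pushed below its trivial value there. Rewriting \eqref{4:1} as $\bigl(\sum_i(d_i/a_i)^{1/p}\bigr)^p<p/\pi^2$ is exactly the admissibility inequality that places the non-trivial branch in the admissible interval of $\bar f_1$, and an intermediate-value argument applied to $\Psi$ then produces the desired root with at least one $w_i$ non-constant, yielding a spatially non-uniform stationary solution.

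The main obstacle is this last step: it rests on quantitative, monotone control of the time-map $L\mapsto M(L)/L$ along the non-trivial branch of $w''=w-w^{p+1}$. I expect the restriction $0<p\le 2$ to enter precisely here, as this is the range in which the classical period--amplitude map of a polynomial oscillator is monotone and admits clean asymptotic expansions at both the bifurcation and homoclinic ends; monotonicity of $\Psi$ along the non-trivial branch then turns \eqref{4:1} into the existence statement via the intermediate value theorem.
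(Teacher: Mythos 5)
Your reduction coincides with the paper's own proof in all structural respects: the components are coupled only through the constant $\bar f_1$, each component is a Neumann problem for a scalar Hamiltonian ODE (the paper writes the orbits in the explicit parametric form \eqref{4:4} instead of rescaling to $w''=w-w^{p+1}$, which is cosmetic), and your identity $\int_0^{L_i}w_i\,dy=\int_0^{L_i}w_i^{p+1}\,dy$ is exactly the paper's computation that $\int_{\tau_i^1}^{\tau_i^2}\bigl((p+2)t^{p+1}/2-t\bigr)P_i(t)^{-1/2}\,dt=0$, which collapses the two global constraints into one scalar equation: your $\Psi(\bar f_1)=1$ is precisely \eqref{4:11}, and your threshold $\pi/\sqrt p$ is the same bifurcation value. (Incidentally, $M_i/L_i<1$ for every non-constant nonnegative profile follows from your identity together with the power-mean inequality; no small-amplitude expansion is needed.)

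The genuine gap is the step you yourself flag as the main obstacle, and it is the analytic core of the theorem. The paper's Lemma \ref{l4:2}, proved in the Appendix by comparing the potential with its cubic Hermite interpolant and estimating the resulting integrals, gives the uniform lower bound $(I_i^1)^{2/p-1}\bigl[(p+2)/2\bigr]^{1/p}I_i^2>(\pi^2/p)^{1/p}$ along the whole nontrivial branch — in your notation $M_i(L)\,L^{2/p-1}>(\pi^2/p)^{1/p}$ for every admissible orbit, not just near the bifurcation. This is not the classical monotonicity of the period map that you ``expect'' to invoke; it is a bound on a specific combination of the two orbit integrals, it is the only place where $0<p\le 2$ enters, and it is what makes \eqref{4:1} the solvability threshold for \eqref{4:11}. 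Without it your intermediate-value closure does not go through: knowing that $\Psi$ lies below the \emph{trivial branch} near a bifurcation point is not knowing that it lies below $1$, since the branch on which all components are non-constant only opens at $\bar f_1=\pi^2\max_i d_i/p$, where the components with small $d_i$ are already deep in the large-$L$ (near-homoclinic) regime and their contributions $(\bar f_1/a_i)^{1/p}M_i/L_i$ are not controlled by \eqref{4:1} without a quantitative estimate of Lemma \ref{l4:2} type; and at the other end your claim $\Psi\to\infty$ as $\bar f_1\to\infty$ uses $2/p-1>0$ and fails at $p=2$, the case the paper treats separately through the exact evaluation $I_2=\pi/2$. So the reduction is correct and matches the paper, but the decisive estimate and a correctly anchored two-sided intermediate-value argument are missing from your proposal.
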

\begin{proof}
We start the proof noting that the dependence of the concentrations
$u_i(x)$ in \eqref{model:1:1} on other concentrations and their
total regulations occur only through the integral invariant
\eqref{inv:1:1}, which does not depend on $x$. Therefore we can assume
without loss of generality that each $u_i$ depends on its own
variable $x_i\in[0,1]$. Hence we rewrite \eqref{2:2} and
\eqref{inv:1:1} in the form
\begin{equation}\label{4:2}
\begin{split}
    &\sum_{i=1}^n\int_{\Omega}u_i(x_i)\,dx_i=1,\\
    \bar{f}_1&=\sum_{i=1}^n    \int_{\Omega}a_iu_i^{p+1}(x_i)\,dx_i.
\end{split}
\end{equation}
Each equation of system \eqref{model:1:1} can be put in the
following form:
\begin{equation}\label{4:3}
    \begin{split}
    \frac{du_i}{dx_i} &=V_i,\\
     \frac{dV_i}{dx_i}   &=\frac{1}{d_i}(\bar{f}_1-a_iu_i^p)u_i.
\end{split}
\end{equation}
System \eqref{4:3} is a Hamiltonian system for any $i=1,\ldots,n$,
in which $x_i$ is considered as a ``time'' variable, with the
Hamiltonian
$$
H_i=\frac{V_i^2}{2}+\frac{1}{d_i}\left(\frac{d_i}{p+2}u_i^{p+2}-\frac{\bar{f}_1}{2}u_i^2\right).
$$
The phase orbits of \eqref{4:3} can be found from the standard formula
$$
V_i=\pm\sqrt{2(H_i^0-U_i)},
$$
where
\begin{align*}
H_i^0&=\frac{V_i^2(x_i^0)}{2}+\frac{1}{d_i}\left(\frac{d_i}{p+2}u_i^{p+2}(x_i^0)-\frac{\bar{f}_1}{2}u_i^2(x_i^0)\right),\\
U_i&=\frac{1}{d_i}\left(\frac{d_i}{p+2}u_i^{p+2}-\frac{\bar{f}_1}{2}u_i^2\right).
\end{align*}

\begin{figure}
\centering
\includegraphics[width=0.7\textwidth]{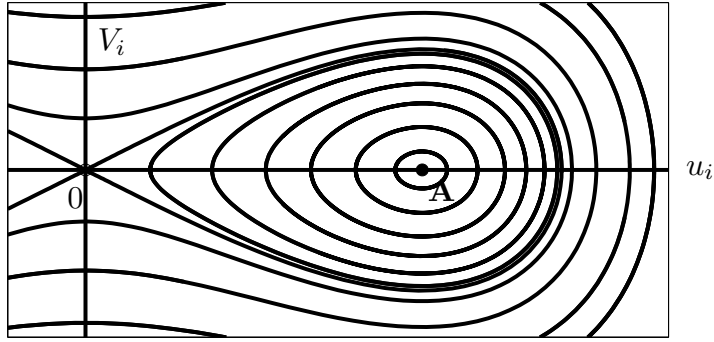}
\caption{The phase portrait of system \eqref{4:3}. The closed curves
surrounding $\bf{A}$ correspond to the candidates for spatially
non-homogeneous stationary solutions to \eqref{model:1}}\label{f1}
\end{figure}

From the form of the phase orbits (see Fig. \ref{f1}) it immediately
follows that there exist orbits that satisfy the condition
$$
V_i(x_i^1)=V_i(x_i^2)=0,\quad x_i^1\neq x_i^2.
$$
These orbits represent closed curves surrounding the center point
$\textbf{A}=\left(\left({\bar{f}_1}{a_i^{-1}}\right)^\frac{1}{p},\,0\right)$
in Fig. \ref{f1}. Different diffusion coefficients correspond to
the motion along the phase orbits with different velocities.

To prove the theorem we need to show that there exist two values
$x_i^1$ and $x_i^2$ such that $|x_i^1-x_i^2|$=1, for
$i=2,\ldots, n$, and corresponding solutions to \eqref{model:1:1}
satisfy the first condition in \eqref{4:2}.

The solutions to system \eqref{model:1:1} can be found in the explicit
parametric form \cite{polyanin1995hes}:
\begin{equation}\label{4:4}
    \begin{split}
    u_i(x_i) & =\left[\frac{p+2}{2a_i}\bar{f}_1\right]^\frac{1}{p}\tau,\quad \tau\geq 0,\\
        x_i  & =\sqrt{\frac{d_i}{\bar{f}_1}}\int_{\tau_0}^\tau
        \frac{dt}{\sqrt{P_i(t)}}+c_i^2,\quad
        P_i(t)=c_i^1+t^2-t^{p+2},\\
            u_i(x_i) & =-\left[\frac{p+2}{2a_i}\bar{f}_1\right]^\frac{1}{p}\tau,\quad \tau\leq 0,\\
        x_i  & =\sqrt{\frac{d_i}{\bar{f}_1}}\int_{\tau_0}^\tau
        \frac{dt}{\sqrt{Q_i(t)}}+c_i^2,\quad
        Q_i(t)=c_i^1+t^2+t^{p+2}.
\end{split}
\end{equation}

To proceed we need the following lemma (the proof is given
in the Appendix).
\begin{lemma}\label{l4:1}
The equation $P_i(t)=c_i^1+t^2-t^{p+2}=0$ has two real positive
roots $0<\tau_i^1<\tau_i^2$ for all values
\begin{equation}\label{4:5}
    c_i^1\in\left(-\left[\frac{2}{2+p}\right]^{\frac{2}{p}+1},\,0\right).
\end{equation}
Moreover,
\begin{equation}\label{4:6}
    \tau_i^1\in\left(0,\,\left[\frac{2}{2+p}\right]^{\frac{2}{p}}\right),\quad
    \tau_i^2\in\left(\left[\frac{2}{2+p}\right]^{\frac{2}{p}},\,1\right).
\end{equation}
\end{lemma}
An analogous lemma holds for $Q_i(t)$.

Now we return to the parametric representation \eqref{4:4}. Consider the
first derivative of the functions $u_i$:
$$
\frac{du_i}{dx_i}=\frac{du_i}{d\tau}\frac{d\tau}{dx_i}=\left[\frac{p+2}{2a_i}\bar{f}_1\right]^\frac{1}{p}\sqrt{\frac{d_i}{\bar{f}_1}}\sqrt{P_i(t)}.
$$
This expression vanishes at the points $\tau_i^1$ and $\tau_i^2$.
Using \eqref{4:4} for $x_i=0$ and $x_i=1$, we obtain
\begin{equation}\label{4:7}
\begin{split}
    x_i=0 &\longrightarrow\quad\sqrt{\frac{d_i}{\bar{f}_1}}\int_{\tau_0}^{\tau_i^1}\frac{dt}{\sqrt{P_i(t)}}+c_i^2=0\\
    x_i=1 &\longrightarrow\quad\sqrt{\frac{d_i}{\bar{f}_1}}\int_{\tau_0}^{\tau_i^2}\frac{dt}{\sqrt{P_i(t)}}+c_i^2=1.
\end{split}
\end{equation}
Letting $\tau_0=\tau_i^1$ we obtain that $c_i^2=0$. We will use the
following notation:
\begin{equation}\label{4:8}
I_i^1=\int_{\tau_i^1}^{\tau_i^2}\frac{dt}{\sqrt{P_i(t)}}=\sqrt{\frac{\bar{f}_1}{d_i}}.
\end{equation}
Remark that these integrals will exist because the roots of $P_i(t)$
are simple when $c_i^1$ satisfy \eqref{4:5}. The formula \eqref{4:8}
establishes the connection between the values of the constant
$c_i^1$ and values of the diffusion coefficient $d_i$, which
determines the velocity of motion of phase points. The latter
implies that \eqref{4:8} guaranteers that the motion from the
initial point $(u_i^1,0)$ to the final point $(u_i^2,0)$ occurs
during the unit time.

Now we are going to prove that the solution \eqref{4:4} satisfies
the first condition in \eqref{4:2}:
\begin{equation}\label{4:9}
    \sum_{i=1}^n\int_{0}^1u_i(x_i)\,dx_i=\left[\frac{p+2}{2}\right]^\frac{1}{p}\sum_{i=1}^n\left[\frac{\bar{f}_1}{a_i}\right]^{\frac{1}{p}}\sqrt{\frac{d_i}{\bar{f}_1}}\int_{\tau_i^1}^{\tau_i^2}
        \frac{tdt}{\sqrt{P_i(t)}}=1.
\end{equation}
From \eqref{4:9} it follows that
\begin{equation*}
    \left[\frac{p+2}{2}\right]^\frac{1}{p}\sum_{i=1}^n\left[\frac{\bar{f}_1}{a_i}\right]^{\frac{1}{p}}\sqrt{\frac{d_i}{\bar{f}_1}}\int_{\tau_i^1}^{\tau_i^2}
        \frac{(p+2)t^{p+1}/2-tdt}{\sqrt{P_i(t)}}=0.
\end{equation*}
Indeed, we have

\begin{equation*}
\begin{split}
&\int_{\tau_i^1}^{\tau_i^2}\frac{(p+2)t^{p+1}/2-tdt}{\sqrt{P_i(t)}}=\frac{1}{2}\int_{\tau_i^1}^{\tau_i^2}
       \frac{d( t^{p+2}-t^2)}{\sqrt{P_i(t)}}=\\
       &= -\frac{1}{2}\int_{\tau_i^1}^{\tau_i^2}\frac{d(
       P_i(t))}{\sqrt{P_i(t)}}=-\left(\sqrt{P_i(\tau_i^2)}-\sqrt{P_i(\tau_i^1)}\right)=0.
\end{split}
\end{equation*}

Using \eqref{4:8} to find $\bar{f}_1$ and substituting this
expression into \eqref{4:9} we obtain
\begin{equation}\label{4:11}
    \left[\frac{p+2}{2}\right]^{\frac{1}{p}}\sum_{i=1}^n\left[\frac{d_i}{a_i}\right]^\frac{1}{p}\left(I_i^1\right)^{\frac{2}{p}-1}\left(I_i^2\right)=1,
\end{equation}
where
$$
I_i^1=\int_{\tau_i^1}^{\tau_i^2}\frac{dt}{\sqrt{P_i(t)}},\quad
I_i^2=\int_{\tau_i^1}^{\tau_i^2}\frac{tdt}{\sqrt{P_i(t)}}.
$$
To conclude the proof we need the following lemma, the proof of which is
given in the Appendix
\begin{lemma}\label{l4:2}
If $0<p\leq 2$ then the following inequality holds:
\begin{equation}\label{4:12}
    I(c_i^1)=(I_i^1)^{\frac{2}{p}-1}\left[\frac{p+2}{2}\right]^{\frac{1}{p}}I_i^2>\left[\frac{\pi^2}{p}\right]^\frac{1}{p}.
\end{equation}
\end{lemma}

Applying the result of Lemma \ref{l4:2} to \eqref{4:11} we obtain
that if \eqref{4:1} holds then there exists a spatially non-uniform
stationary solution to \eqref{model:1}.
\end{proof}

\paragraph{Remark 4.1.}From the symmetry of the system, the time
needed to get from the point $(u_i^1,0)$ to the point $(u_i^2,0)$ is
the same as the time needed to get from $(u_i^2,0)$ to $(u_i^1,0)$,
and the speed of movement is inversely proportional to $\sqrt{d_i}$.
Therefore, reducing these values twice we guaranteer that spatially
non-uniform stationary solution exists, which corresponds to the
full cycle in the phase plane; reducing 4 times we obtain the
solution which corresponds to the movement of the phase point along
the cycle two times, and so on. Hence system \eqref{model:1} has
non-uniform stationary solutions that correspond to movement along
the cycles in Fig. \ref{f1} arbitrary number of times (see Fig. \ref{f2}).

\begin{figure}
\centering
\includegraphics[width=0.8\textwidth]{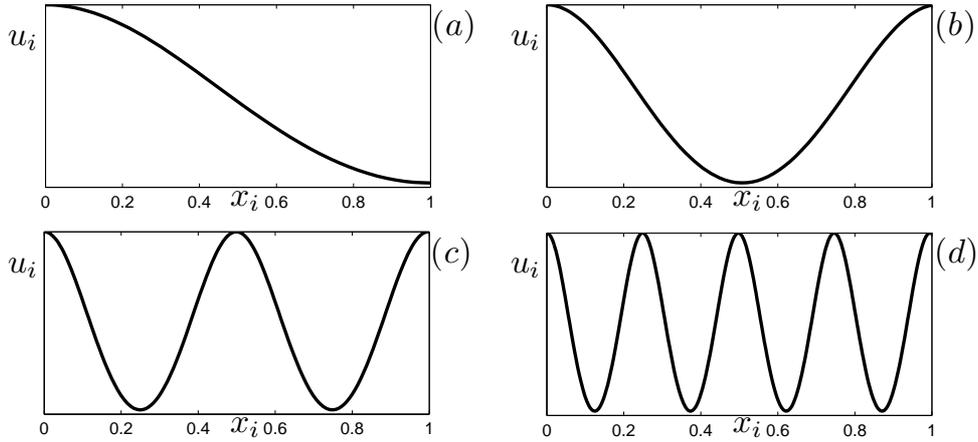}
\caption{Spatially non-uniform stationary solutions to
\eqref{model:1} which satisfy the conditions $|x_i^1-x_i^2|=1$, and
the boundary conditions. $(a)$ A solution that corresponds to the movement
along half the cycle in Fig. \ref{f1}; $(b)$ full cycle; $(c)$ two
full cycles; $(d)$ four full cycles. Changing $d_i$ and hence the
velocity of the movement along the phase curves we can always obtain
solutions with arbitrary number of full cycles}\label{f2}
\end{figure}

\paragraph{Remark 4.2.} We introduce the following parameter
\begin{equation}\label{4:13}
    d=\sum_{i=1}^n\left(\frac{d_i}{a_i}\right)^\frac{1}{p}.
\end{equation}
Theorem \ref{th4:1} can be restated as follows: If
$$
d<\mu=(p\pi^{-2})^\frac{1}{p}
$$
then there exists a spatially non-uniform stationary solution to
\eqref{model:1}. On the other hand, if $d>\mu$ we obtain from
Theorem \ref{th3:1} that $R_i$ are stable. Therefore we can consider
$d$ as a bifurcation parameter. As this parameter decreases
spatially uniform stationary solutions become unstable, and
spatially non-uniform solutions appear in the system according to
the standard Turing bifurcation scenario.\vspace{2mm}

Now we consider the case of the spatially explicit hypercycle
\eqref{model:2}.

\begin{theorem}\label{th4:2}
Suppose that \eqref{4:1} holds. If the parameters of problem
\eqref{model:2} can be represented by one-parameter perturbation
$$
d_i=d_0+\varepsilon l_i,\quad a_i=a_0+\varepsilon m_i,\quad
m_i,\,l_i\mbox{ are constant,}\quad \varepsilon>0,
$$
where $\varepsilon$ is a small parameter, then there exist spatially
non-uniform stationary solutions to system \eqref{model:2}.
\end{theorem}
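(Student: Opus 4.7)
The plan is to treat the statement as a perturbation result: the unperturbed problem ($\varepsilon = 0$) has all diffusion coefficients equal to $d_0$ and all reaction rates equal to $a_0$, and in this degenerate case the hypercycle system (\ref{model:2}) admits a symmetric solution that is precisely a non-uniform stationary solution to (\ref{model:1}). Implicit function theorem in $\varepsilon$ then produces a branch of non-uniform solutions of the true hypercycle for small $\varepsilon>0$.

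First, I would fix $\varepsilon=0$ and look for $\mathbf{u}^{(0)}=(u^*,u^*,\ldots,u^*)$ with $u^*$ independent of the index $i$. Since $u_{i-1}=u^*$, the cyclic coupling in (\ref{model:1:2}) collapses to $(u^*)^p$, and each of the $n$ equations reduces to the single one-dimensional elliptic problem
\begin{equation*}
d_0 u^*_{xx}+u^*\bigl(a_0(u^*)^p-\bar{f}\bigr)=0,\quad u^*_x(0)=u^*_x(1)=0,
\end{equation*}
with the normalisation $\int_0^1 u^*\,dx=1/n$ coming from (\ref{2:2}). This is exactly the problem solved in the proof of Theorem \ref{th4:1} for equal coefficients $a_i\equiv a_0$, $d_i\equiv d_0$; hypothesis (\ref{4:1}) at $\varepsilon=0$ reads $n(d_0/a_0)^{1/p}<(p/\pi^2)^{1/p}$, so the phase-plane / Hamiltonian construction of Theorem \ref{th4:1} delivers a non-constant $u^*$, yielding a non-uniform stationary solution $\mathbf{u}^{(0)}$ of (\ref{model:2}) at $\varepsilon=0$.

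Second, define the map $F(\mathbf{u},\varepsilon)=(F_1,\ldots,F_n)$ with
\begin{equation*}
F_i(\mathbf{u},\varepsilon)=(d_0+\varepsilon l_i)\partial_x^2 u_i+u_i\bigl((a_0+\varepsilon m_i)u_{i-1}^p-\bar{f}_2[\mathbf{u}]\bigr),
\end{equation*}
regarded as an operator from the affine subspace $\{\mathbf{u}\in (C^{2,\alpha}[0,1])^n:\partial_x u_i(0)=\partial_x u_i(1)=0,\ \sum_i\int u_i=1\}$ into $(C^{\alpha}[0,1])^n$, with the constraint enforced by treating $\bar{f}_2$ as a Lagrange multiplier. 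By construction $F(\mathbf{u}^{(0)},0)=0$. The implicit function theorem will produce a smooth curve $\varepsilon\mapsto\mathbf{u}(\varepsilon)$ of zeros provided the Fr\'{e}chet derivative $D_{\mathbf{u}}F(\mathbf{u}^{(0)},0)$ is an isomorphism on the subspace $\{\mathbf{w}:\sum_i\int w_i=0\}$.

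Third, I would verify this invertibility by exploiting the cyclic symmetry at $\varepsilon=0$: since $\mathbf{u}^{(0)}$ is invariant under the shift $u_i\mapsto u_{i+1}$, the linearisation $D_{\mathbf{u}}F(\mathbf{u}^{(0)},0)$ is a block-circulant operator of the form $A(u^*)\,\mathbf{w}+B(u^*)\,S\mathbf{w}$ with $S$ the cyclic shift. Diagonalising with the discrete Fourier basis $\omega^{ij}$, $\omega=e^{2\pi i/n}$, decouples the system into $n$ scalar one-dimensional Sturm--Liouville operators on $[0,1]$ with Neumann conditions, one for each $n$-th root of unity. The mode corresponding to $\omega^0$ is handled by the integral constraint (it is the direction along which the normalisation has been removed), and the remaining modes correspond to perturbations with zero total mass; genericity — ensured because $u^*$ is not a bifurcation point of the scalar reduced problem under the strict inequality in (\ref{4:1}) — guarantees that none of these operators has zero as an eigenvalue. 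This is the main technical obstacle: one must check that the Fourier-decomposed linear operators are injective for every non-trivial root of unity. If in a non-generic configuration one of them is singular, a Lyapunov--Schmidt reduction to the finite-dimensional kernel restores the bifurcation equation, whose $\varepsilon$-dependent right-hand side is non-degenerate because the perturbation $(l_i,m_i)$ breaks the cyclic symmetry.

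Finally, once the IFT (or Lyapunov--Schmidt) step produces $\mathbf{u}(\varepsilon)$ with $\mathbf{u}(0)=\mathbf{u}^{(0)}$, continuity in $\varepsilon$ preserves non-uniformity of each component (the $C^1$-distance from $u^*$ is $O(\varepsilon)$ while $u^*$ is strictly non-constant), so $\mathbf{u}(\varepsilon)$ is a spatially non-uniform stationary solution of (\ref{model:2}) for all sufficiently small $\varepsilon>0$, which is the claim.
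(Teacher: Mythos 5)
Your strategy is genuinely different from the paper's. The paper does not linearize at all: it rewrites \eqref{model:2} in the form \eqref{4:14}, i.e.\ as the autocatalytic Hamiltonian equation \eqref{4:15} plus a right-hand side which, by the parametric representation \eqref{4:4}, is of order $\varepsilon$, and then invokes persistence of the phase-plane orbits constructed in Theorem \ref{th4:1} under small perturbations of a Hamiltonian system (citing \cite{guckenheimer1983nod}) to conclude that for $d<\mu$ every non-uniform stationary solution of \eqref{model:1} has a counterpart for \eqref{model:2}. Your route --- take the fully symmetric non-uniform profile $u_i\equiv u^*$ at $\varepsilon=0$ (which is indeed a hypercycle steady state, and exists under \eqref{4:1} by the symmetric case of Theorem \ref{th4:1} with $\int_0^1u^*\,dx=1/n$) and continue it in $\varepsilon$ by the implicit function theorem, exploiting the block-circulant structure of the linearization --- is a legitimate and in principle more rigorous alternative; the discrete Fourier diagonalization, and the observation that the nonlocal $\bar f_2$-contribution only affects the $\omega^0$ mode, are sound.

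However, as written the proposal has a genuine gap exactly at the step you yourself flag: invertibility of $D_{\mathbf u}F(\mathbf u^{(0)},0)$ is asserted, not proved, and the strict inequality in \eqref{4:1} is not a nondegeneracy condition --- it says nothing about the spectrum of the linearized operators about the \emph{non-constant} profile $u^*$. Concretely: (i) for the $\omega^0$ mode, $u^*$ lies in a one-parameter family of orbits (the constant $c_i^1$ in \eqref{4:4}), so the linearization has a natural candidate kernel direction, and eliminating it via the mass constraint requires showing that the map from the orbit parameter to $\int_0^1u\,dx$ (with the self-consistent $\bar f$) has nonzero derivative, a monotonicity statement you do not establish; (ii) for $k\neq0$ you must prove injectivity of the complex-coefficient Neumann operators $d_0w''+(a_0(u^*)^p-\bar f)w+\omega^{-k}p\,a_0(u^*)^pw$, and no argument is given; (iii) the fallback claim that a possible kernel is harmless because the perturbation $(l_i,m_i)$ ``breaks the cyclic symmetry'' is again an assertion --- symmetry breaking does not by itself make the Lyapunov--Schmidt bifurcation equation solvable; one would have to compute the leading-order bifurcation function. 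To be fair, the paper's own argument is also soft at the analogous point (persistence of an orbit that still meets both Neumann conditions and the integral constraint), but it bypasses the linearized-operator analysis entirely by working with the explicit orbit structure; if you keep the IFT route, items (i)--(ii), or the explicit reduction in (iii), are the actual content that must be supplied.
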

\begin{proof}
System \eqref{model:2} can be rewritten in the following form:
\begin{equation}\label{4:14}
    d_iu_i''+u_i(a_iu_i-\bar{f}_1)=u_ia_i(u_i-u_{i-1})+(\bar{f}_1-\bar{f}_2)u_i,\quad
    i=1,\ldots,n.
\end{equation}

If $\varepsilon=0$ then we have that $\bar{f}_1=\bar{f}_2$ and
\begin{equation}\label{4:15}
    d_0u_i''+u_i(a_0u_i-\bar{f}_1)=0,
\end{equation}
which is a particular case of the autocatalytic system
\eqref{model:1:1}. According to Theorem \ref{th4:1} system
\eqref{4:15} possesses spatially non-uniform stationary solutions.
Using the presentation \eqref{4:4} it can be shown that the
right hand side of \eqref{4:14} is of the order of $\varepsilon$,
i.e., can be rewritten in the form
\begin{equation}\label{4:16}
    d_iu_i''+u_i(a_iu_i-\bar{f}_1)=\varepsilon \Psi_i(x),\quad
    i=1,\ldots,n.
\end{equation}
where $\Psi_i(x)$ are bounded functions. This implies that system
\eqref{4:16} is a perturbation of the Hamiltonian system
\eqref{4:15}. According to the general
theory \cite{guckenheimer1983nod} stable and unstable manifolds of
the perturbed orbits will be close to the corresponding manifolds of
the unperturbed system. Therefore for $d<\mu$ for each non-uniform
stationary solution of \eqref{model:1} there exists spatially
non-uniform stationary solution to \eqref{model:2}.
\end{proof}

\paragraph{Remark 4.3.} If we assume that the inverse to inequality
\eqref{3:10} holds, then it can be rewritten in the form
$$
\sum_{i=1}^n\left[\frac{d_i}{a_i}\right]^\frac{1}{p}>\frac{p^\frac{1}{p}}{\pi^\frac{2}{p}}.
$$
Indeed, we can rewrite inverse to \eqref{3:10} in the form
$$
\left[\prod_{i=1}^n
d_i\right]^\frac{1}{pn}>\frac{p^\frac{1}{p}}{\beta\pi^\frac{2}{p}}.
$$
Using the properties of arithmetic and geometric means we obtain
$$
\beta=\sum_{i=1}^n\frac{1}{(a_i)^{\frac{1}{p}}}\leq
n\left[\prod_{i=1}^n\frac{1}{(a_i)^\frac{1}{p}}\right]^\frac{1}{n}.
$$
From the previous it follows that
$$
n\left[\prod_{i=1}^n
d_i\right]^\frac{1}{pn}\left[\prod_{i=1}^n\frac{1}{(a_i)^\frac{1}{p}}\right]^\frac{1}{n}=n\left[\prod_{i=1}^n\left[\frac{d_i}{a_i}\right]^\frac{1}{p}\right]^\frac{1}{n}>\frac{p^\frac{1}{p}}{\pi^\frac{2}{p}}.
$$
Once again using the inequality between arithmetic and geometric
means we obtain
$$
n\left[\prod_{i=1}^n\left[\frac{d_i}{a_i}\right]^\frac{1}{p}\right]^\frac{1}{n}\leq
\sum_{i=1}^n\left[\frac{d_i}{a_i}\right]^\frac{1}{p},
$$
which proves the desired result.

In words, we showed in this remark that if the inverse to \eqref{3:10} holds, then  the inverse to \eqref{4:1} is true, which means that if the spatially homogeneous solution to hypercycle system is stable there are no spatially non-homogeneous solutions.

\paragraph{Example 4.1.} It is possible to obtain an explicit
solution to \eqref{model:1:2} in the special case when $n=4$,
$d_1=d_3,\,d_2=d_4,\,a_i=a,\,i=1,2,3,4$. First, we rewrite
\eqref{inv:2:1} in the form
$$
\bar{f}_2=\int_0^1 \langle Au, u\rangle dx,\quad
u=(u_1,u_2,u_3,u_4)'.
$$
The expression $\langle u,v\rangle$ denotes the standard scalar product
in $\mathbb R^4$, $'$ is the transformation. Matrix $A$ is circular
and has eigenvalues
$\lambda_1=a,\,\lambda_3=-a,\,\lambda_2=\lambda_4=0$. Consider the
orthogonal transformation that reduces $A$ to its canonical form:
$$
T=\left(%
\begin{array}{cccc}
  1/2 & \sqrt{2}/2 & -1/2 & 0 \\
  1/2 & 0 & 1/2 & -\sqrt{2}/2 \\
  1/2 & \sqrt{2}/2 & -1/2 & 0 \\
  1/2 & 0 & 1/2 & -\sqrt{2}/2 \\
\end{array}%
\right).
$$

Summing all equations in the hypercyclic system we have

$$
\langle u_{xx},D\rangle=\bar{f}_2\langle u,
\textbf{1}\rangle-\langle Au, u\rangle,
$$
where $D$ is the diffusion vector, $D=(d_1,d_2,d_1,d_2)$,
$\textbf{1}=(1,1,1,1)$. Let $u(x)=Tv(x)$, $v=(v_1,v_2,v_3,v_4)$. It
follows that
$$
\langle v_{xx},T'D\rangle=\bar{f}_2\langle v,
T'\textbf{1}\rangle-\langle T'ATV, v\rangle.
$$
Since $\langle T'ATV, v\rangle=k(v_1^2-v_3^2)$, the last equation
takes the form
$$
(d_1+d_2)(v_1)_{xx}=2\bar{f}_2v_1-k(v_1-v_3).
$$
Suppose that $u_1+u_3=u_2+u_4$. Then we have that $v_3=0$ and the
function $w(x)=u_1+u_2$, satisfies the differential equation
$$
(d_1+d_2)w_{xx}=2\bar{f}_2w-aw^2,
$$
whose explicit solution can be found using \eqref{4:4}.

\paragraph{Remark 4.4.} As in the case of system \eqref{model:1}
parameter $d$ can be considered as a bifurcation parameter for
\eqref{model:2}.

\section{Asymptotic behavior of the spatially explicit autocatalytic and hypercyclic systems}
Consider the local system of autocatalytic reaction \eqref{intro:0}
in the form
\begin{equation}\label{5:1}
\begin{split}
    \frac{dw_i}{dt}&=w_i(a_iw_i^p-f_1^{loc}(t)),\quad
    f_1^{loc}(t)=\sum_{i=1}^na_iw_i^{p+1},\quad t>s,\\
    w_i(s)&=\xi_i,\quad i=1,\ldots,n, \quad \sum_{i=1}^nw_i=1.
\end{split}
\end{equation}

For the following we need
\begin{definition}
We shall say that the initial conditions for system
\eqref{model:1} and system \eqref{5:1} are concerted if

\begin{equation}\label{5:2}
\xi_i=\bar{\varphi}_i=\int_{\Omega}\varphi_i(x)dx.
\end{equation}

\end{definition}

Let us assume that the initial conditions for systems
\eqref{model:1} and \eqref{5:1} are concerted. On integrating system
\eqref{model:1} with respect to $x$ and using the equality
$\int_{\Omega}\Delta v(x)dx=\int_{\Gamma}\partial_{\textbf{n}}v
ds=0$ we obtain
$$
\frac{d\bar{v}_i}{dt}=a_i\int_{\Omega}v_i^{p+1}(x,t)dx-\bar{v}_i(t)f_1^{loc}(t),\quad
t>s, \quad \bar{v}_i(s)=\bar{\varphi}_i=\xi_i,\quad i=1,\ldots, n,
$$
where
$$
\bar{v}_i(t)=\int_{\Omega}v_i(x,t)dx.
$$
Since $|\Omega|=1$ we have
$$
\int_\Omega
v_i^{p+1}(x,t)dx\geq\left(\int_{\Omega}v_i(x,t)\,dx\right)^{p+1}=\bar{v}_i^{p+1}(t),
$$
and, consequently,

\begin{equation}\label{5:3}
\frac{d\bar{v}_i}{dt}\geq \bar{v}_i^{p+1}(t)
-\bar{v}_i(t)f_1^{loc}(t),\quad t>s, \quad
\bar{v}_i(s)=\bar{\varphi}_i=\xi_i,\quad i=1,\ldots, n,
\end{equation}

\begin{lemma}\label{l5:1}
Let the initial conditions for systems \eqref{model:1} and
\eqref{5:1} be concerted. Then
\begin{equation}\label{5:4}
    \beta^{-p}\leq f_1^{loc}(t)\leq f_1(t),
\end{equation}
where $\beta=\sum_{i=1}^na_i^{-\frac{1}{p}}$.
\end{lemma}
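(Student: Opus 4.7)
The plan is to treat the two inequalities separately. The lower bound $\beta^{-p} \leq f_1^{loc}(t)$ is a static fact about the replicator trajectory constrained to the probability simplex, while the upper bound $f_1^{loc}(t) \leq f_1(t)$ links the well-mixed system \eqref{5:1} to the reaction-diffusion model \eqref{model:1} through Jensen's inequality and the concerted initial data.

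For the lower bound I would view $f_1^{loc}(t)$ as the value of $g(w) = \sum_i a_i w_i^{p+1}$ along the trajectory $w(t)$, which lies in the simplex $\{w \geq 0 : \sum_i w_i = 1\}$ for all $t \geq s$ by the conservation built into \eqref{5:1}. Using a Lagrange multiplier for the constraint $\sum w_i = 1$, the critical-point condition $(p+1) a_i w_i^p = \lambda$ forces $w_i \propto a_i^{-1/p}$, and normalization gives $w_i^\ast = a_i^{-1/p}/\beta$, i.e.\ precisely the coexistence equilibrium $P_1$. Substituting back yields $g(P_1) = \beta^{-(p+1)} \sum_i a_i^{-1/p} = \beta^{-p}$, and strict convexity of $g$ on the simplex rules out any smaller value, so $f_1^{loc}(t) \geq \beta^{-p}$.

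For the upper bound I would first verify it at the initial time. The concerted condition $\xi_i = \bar\varphi_i = \int_\Omega \varphi_i\,dx$ together with Jensen's inequality applied to the convex map $u \mapsto u^{p+1}$ on a domain of unit measure yields $\int_\Omega \varphi_i^{p+1} dx \geq \bar\varphi_i^{p+1} = \xi_i^{p+1}$, hence $f_1(s) \geq f_1^{loc}(s)$. To propagate this to $t > s$ the natural route is to monitor $\Phi(t) = f_1(t) - f_1^{loc}(t)$ and exploit the fact, checked by direct calculation from \eqref{5:1}, that $\dot f_1^{loc} = (p+1)\bigl[\sum a_i^2 w_i^{2p+1} - (f_1^{loc})^2\bigr] \geq 0$ by Cauchy--Schwarz with weights $w_i$. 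A parallel calculation for $\dot f_1$ on the PDE side produces an analogous bulk term $(p+1)\bigl[\sum a_i^2 \int v_i^{2p+1} dx - f_1^2\bigr]$ plus a non-positive diffusive contribution $-p(p+1)\sum a_i d_i \int v_i^{p-1}|\nabla v_i|^2 dx$; one then uses Jensen on $\int v_i^{2p+1} dx$ together with the initial ordering to keep $\Phi \geq 0$.

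The hard part is precisely this propagation step. Because the diffusive contribution to $\dot f_1$ has the wrong sign, a pointwise comparison $\dot f_1 \geq \dot f_1^{loc}$ cannot be expected. One must instead relate the spatial averages $\bar v_i(t)$, which live on the simplex, to the ODE trajectory $w_i(t)$: both start at $\xi$ but are driven by different mean-field regulators ($f_1$ versus $f_1^{loc}$). A Gronwall- or Lyapunov-type argument that uses Jensen to dominate the higher moments $\int v_i^q dx$ by powers of $\bar v_i$ should close the gap, but arranging the estimate so that the initial order is preserved despite the negative diffusive term is where the real technical content sits.
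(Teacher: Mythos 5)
Your lower bound is fine: minimizing $g(w)=\sum_i a_i w_i^{p+1}$ over the simplex by Lagrange multipliers and convexity is a valid (if slightly heavier) substitute for the paper's one-line H\"older estimate, and it gives exactly $\beta^{-p}$ at $P_1$. The problem is the upper bound $f_1^{loc}(t)\leq f_1(t)$, which is the real content of the lemma, and there your argument stops precisely where the proof has to happen. You verify the inequality at the initial time by Jensen, then propose to propagate it by differentiating $\Phi(t)=f_1(t)-f_1^{loc}(t)$ and closing a Gronwall/Lyapunov estimate; but, as you yourself note, the diffusive contribution to $\dot f_1$ has the wrong sign, so a derivative comparison ``cannot be expected'', and your closing sentences (``should close the gap'', ``where the real technical content sits'') concede that the propagation step is not actually carried out. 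As written, the proposal proves the claim only at $t=s$ and leaves it open for $t>s$; moreover the route chosen (controlling higher moments $\int_\Omega v_i^{q}\,dx$ against the negative diffusion term) is exactly the direction in which the estimate does not close easily.

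The missing idea is that one should not compare $f_1$ and $f_1^{loc}$ through their time derivatives at all, but argue by contradiction at the level of the spatial averages, using the global constraint \eqref{2:1}. Suppose $f_1(t)<f_1^{loc}(t)$ on some small interval. Integrating \eqref{model:1} over $\Omega$ and using Jensen's inequality (with $|\Omega|=1$) gives, on that interval,
\begin{equation*}
\frac{d\bar v_i}{dt}\;\geq\; a_i\bar v_i^{\,p+1}-\bar v_i\, f_1(t)\;>\; a_i\bar v_i^{\,p+1}-\bar v_i\, f_1^{loc}(t),
\end{equation*}
so each $\bar v_i$ is a strict supersolution of the equation satisfied by $w_i$ in \eqref{5:1}; since the initial data are concerted, the comparison theorem for ODEs yields $\bar v_i(t)>w_i(t)$ for every $i$ on that interval. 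But $\sum_i\bar v_i(t)=\sum_i w_i(t)=1$, a contradiction. This is the paper's proof: the conservation law, not a Gronwall bound on $\Phi$, is what rules out $f_1<f_1^{loc}$, and no control of the diffusion term beyond Jensen's inequality is needed.
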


\begin{proof}
First, we prove the left inequality in \eqref{5:4}. Using \eqref{2:1}
and H\"{o}lder's inequality we have
$$
1=\left(\sum_{i=1}^nw_i(t)\right)^{p+1}\leq\left[\sum_{i=1}^n\frac{1}{a_i^p}\right]^p\sum_{i=1}^na_iw_i^{p+1}=\beta^pf_1^{loc}(t).
$$

To prove the right inequality in \eqref{5:4} we assume that there exists $s\geq 0$ that $f_1(s)< f_1^{loc}(s)$. Since the functions $f_1(t)$ and $f_1^{loc}$ are continuous, there exists neighborhood $U_{\delta}=\{t\colon 0\leq t-s<\delta\}$ from which $f_1(t)<f_1^{loc}(t)$ follows. Then from \eqref{5:3} it follows that
\begin{equation}\label{5:5}
    \frac{d\bar{v}_i}{dt}\geq \bar{v}_i^{p+1}(t)-\bar{v}_i(t)f_1(t),\quad t\in U_{\delta},\quad i=1,\ldots,n.
\end{equation}
Due to the fact that the initial conditions of \eqref{model:1} and \eqref{5:1} are concerted, then from the comparison theorem \cite{tikhonov1985de} we obtain
\begin{equation}\label{5:6}
    \bar{v}_i(t)>w_i(t),\quad t\in U_{\delta},\quad i=1,\ldots,n,
\end{equation}
where $w_i(t)$ are the solutions to \eqref{5:1}. From the other hand we should have $\sum_{i=1}^n\bar{v}_i(t)=\sum_{i=1}^nw_i(t)=1$; we obtain a contradiction.

\end{proof}

\begin{theorem}\label{th5:1}
Let $p\geq 1$. Then for almost all initial conditions
$\varphi_i(x),\,\sum_{i=1}^n\int_{\Omega}\varphi_i(x)dx=1$ there
exists an index $j$, $1\leq j\leq n$ \emph{(}which depends
on $\varphi_i(x)$\emph{)} such that ${v}_i(x,t)\to0$ for all
$i\neq j$ in the space $L_{p+1}$, and $\int_\Omega{v}_j(x,t)\to 1$ when $t\to\infty$.
\end{theorem}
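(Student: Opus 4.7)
The plan is to transfer the competitive exclusion principle from the local system \eqref{5:1} to the spatially extended system \eqref{model:1} through the integrated quantities $\bar v_i(t)=\int_\Omega v_i(x,t)\,dx$, with Lemma \ref{l5:1} serving as the bridge. The exceptional ``measure-zero'' set referred to in ``almost all initial conditions'' corresponds to those points on the simplex $\{\sum\xi_i=1,\ \xi_i\ge 0\}$ from which the replicator-type ODE \eqref{5:1} fails to select a unique vertex. For every other starting point, the classical analysis of \eqref{5:1} with $p\ge 1$ (strict monotonicity of $f_1^{loc}$ along trajectories, together with instability of the interior and mixed equilibria $P_1,Q_j,\ldots$) produces a single index $j=j(\varphi)$ with $w_j(t)\to 1$ and $w_i(t)\to 0$ for $i\ne j$. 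I would then pair the PDE data $\varphi_i$ with the concerted ODE data $\xi_i=\bar\varphi_i$ of \eqref{5:2} and show that the same $j$ governs the PDE.

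I would first integrate \eqref{model:1} over $\Omega$, using the Neumann condition \eqref{bnd:1} to eliminate the diffusion term, obtaining $\dot{\bar v}_i = a_i\int_\Omega v_i^{p+1}\,dx - \bar v_i\,f_1(t)$. Jensen's inequality with $|\Omega|=1$ yields $\int v_i^{p+1}\ge \bar v_i^{p+1}$, which is precisely the differential inequality \eqref{5:3}. Lemma \ref{l5:1} supplies $\beta^{-p}\le f_1^{loc}(t)\le f_1(t)$, so the nonlocal ``death'' term in the PDE dominates that of the ODE and is uniformly bounded below by a positive constant.

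The core step is a contradiction argument in the spirit of the proof of Lemma \ref{l5:1}. Assuming $\bar v_i(t)\not\to 0$ for some $i\ne j$, the conservation $\sum_k\bar v_k=1$ keeps $\bar v_j$ away from $1$ along a sequence of times. Combining \eqref{5:3} with the scalar comparison principle, and with the known ODE asymptotics $w_j\to 1$ and $w_i\to 0$, I would deduce that the discrepancy $\sum_k|\bar v_k(t)-w_k(t)|$ cannot remain positive without violating either $\sum\bar v_k=1$ or $f_1\ge f_1^{loc}$. The conclusion is $\bar v_j(t)\to 1$ and $\bar v_i(t)\to 0$ for every $i\ne j$.

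The main obstacle I expect is upgrading the integrated convergence $\bar v_i\to 0$ to convergence of $v_i$ in $L_{p+1}(\Omega)$. For this I would derive a uniform $L^\infty$ bound on each $v_i$: at a spatial maximum of $v_i$, $\Delta v_i\le 0$, so the reaction term gives $\partial_t v_i\le v_i(a_iv_i^p - f_1(t))$, and since $f_1(t)\ge\beta^{-p}$ by Lemma \ref{l5:1} this keeps $v_i(x,t)\le M_i:=\max\{\|\varphi_i\|_\infty,(a_i\beta^p)^{-1/p}\}$ for all $t>0$. With this bound,
\[
\int_\Omega v_i^{p+1}(x,t)\,dx \;\le\; M_i^{\,p}\int_\Omega v_i(x,t)\,dx \;=\; M_i^{\,p}\,\bar v_i(t)\;\longrightarrow\;0,
\]
which is the desired $L_{p+1}$ decay. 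The delicate points are the rigorous application of the maximum principle to this system with nonlocal time-dependent coupling, and the contradiction argument in the previous paragraph, where the strictness of Jensen's inequality (whenever $v_i$ is not spatially constant) must be exploited to propagate the ODE selection mechanism to the integrated PDE variables.
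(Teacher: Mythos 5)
Your route is genuinely different from the paper's. The paper does not argue through the means $\bar v_i$ at all: it writes the solution as $v_i(x,t)=w_i(t)+z_i(x,t)$, where $w_i$ is the solution of the concerted local system \eqref{5:1}, integrates the PDE over $\Omega$, and obtains the identity \eqref{5:8}, $a_i\int_\Omega v_i^{p+1}\,dx=(f_1(t)-f_1^{loc}(t))w_i(t)+a_iw_i^{p+1}(t)$; the $L_{p+1}$ decay of every losing species is then read off directly from the multistability of \eqref{5:1} ($w_i\to 0$ for $i\neq j$), with no $L^\infty$ bound and no separate comparison argument needed. Your plan replaces this coupling by (a) a contradiction argument showing $\bar v_i\to0$ for $i\neq j$, and (b) a maximum-principle $L^\infty$ bound to convert $\bar v_i\to 0$ into $L_{p+1}$ decay. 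Both steps have real gaps.

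Step (b) fails as stated: at a spatial maximum you only get $\partial_t v_i\le v_i(a_iv_i^p-f_1(t))$ with the sole information $f_1\ge\beta^{-p}$ from Lemma \ref{l5:1}. If $\|\varphi_i\|_\infty>(a_i\beta^p)^{-1/p}$, then at the level $M_i=\|\varphi_i\|_\infty$ the right-hand side can be positive, and the comparison ODE $\dot y=y(a_iy^p-\beta^{-p})$ blows up in finite time above the threshold, so $\{v_i\le M_i\}$ is not invariant and no uniform bound follows; you would need $f_1(t)$ to dominate $a_i\|v_i(\cdot,t)\|_\infty^p$, which is not available. Step (a) is not yet an argument: the inequality \eqref{5:3} bounds $\dot{\bar v}_i$ only from below, and since $f_1\ge f_1^{loc}$ the comparison with \eqref{5:1} runs in the unhelpful direction; moreover, without an upper bound of $\int_\Omega v_i^{p+1}$ by a multiple of $\bar v_i$ (exactly what (b) was supposed to supply), a species with small mean can still have a large growth term through spatial concentration, so "the discrepancy cannot remain positive" does not follow from $\sum_k\bar v_k=1$ and Lemma \ref{l5:1} alone. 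The contradiction scheme of Lemma \ref{l5:1} works there because assuming $f_1<f_1^{loc}$ yields the strict inequality \eqref{5:6} for \emph{all} $i$ simultaneously, contradicting the common normalization; here you need convergence of each $\bar v_i$ to the limit of the corresponding $w_i$, a much stronger conclusion, and your sketch does not show how to obtain it. The missing ingredient is precisely the paper's device of tying the PDE to the concerted ODE through the representation \eqref{5:7} and the identity \eqref{5:8}.
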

\begin{proof}
We have $p\geq 1$, and hence $H_{p+1}^1 \subseteq H_2^1$. The
eigenfunctions $\psi_s,\,s=0,1,2,\ldots$ of the problem
\eqref{eigenv:pr} form a complete system in $H_2^1$. Let us
represent
$$
\varphi_i(x)=\bar{c}_i^0+z_i(x),\quad z_i(x)=\sum_{s=1}^\infty
\bar{c}_i^s\psi_s(x).
$$
Let $w_i(t)$ be the solutions to \eqref{5:1} and let the initial
conditions for systems \eqref{model:1} and \eqref{5:1} be
concerted. We will look for a solution to \eqref{model:1} in the
form
\begin{equation}\label{5:7}
    \begin{split}
    v_i(x,t) & =w_i(t)+z_i(x,t),\quad z_i(x,t)=\sum_{s=0}^\infty c_i^s(t)\psi_s(x),\\
       w_i(0) & =c_i^0,\quad c_i^m(0)=\bar{c}_i^m,\quad m=1,2,\ldots
\end{split}
\end{equation}
Inserting
\eqref{5:7} into \eqref{model:1} we obtain
$$
\frac{dw_i(t)}{dt}+\frac{\partial z_i(x,t)}{\partial
t}=a_iv_i^{p+1}(x,t)-f_1(t)(w_i(t)+z_i(x,t))+d_i\frac{\partial^2
z_i(x,t)}{\partial x^2}.
$$
Integrating the last equation with respect to $x$ and noting
$\int_\Omega \psi_s(x)dx=0$ give
$$
\frac{dw_i(t)}{dt}=a_i\int_\Omega v_i^{p+1}(x,t)dx-f_1(t)w_i(t).
$$
Using the fact that $w_i(t)$ are the solutions to \eqref{5:1} we
obtain
\begin{equation}\label{5:8}
    a_i\int_\Omega
    v_i^{p+1}(x,t)dx=(f_1(t)-f_1^{loc}(t))w_i(t)+a_iw_i^{p+1}(t).
\end{equation}
It is known that solutions to \eqref{5:1} have a property of
multistability. It means that all the vertexes of the simplex are
stable, and the choice of initial conditions determines to which
vertex the system evolves. In other words, for almost all initial
conditions $\xi_i$ the system \eqref{5:1} ends up in $R_j$, for
which all the coordinates excluding $j$ are zero ($w_i(t)\to0$ when
$t\to\infty$ for all $i\neq j$, and $w_j(t)\to 1$). Hence, from
\eqref{5:8} the theorem follows.
\end{proof}

\paragraph{Remark 5.1.} Theorem \eqref{th5:1} answers a natural
question which spatially non-uniform stationary solution of
\eqref{model:1} survives in the evolutionary process. To answer it we
need to consider two systems \eqref{model:1} and \eqref{5:1} with
concerted initial conditions. As was mentioned system \eqref{5:1}
possesses the property of multistability; each vertex of the simplex
has its own basin of attraction. If we denote these basins as
$D_1,\ldots,D_n$, then the number of the basin, to which the initial
conditions of \eqref{5:1} belong, determines which spatially
non-uniform solution will dominate the evolution. Note that for the
dominant solution
\begin{equation}\label{5:9}
    \int_{\Omega} v(x,t)dx\to 1\mbox{ for } t\to\infty.
\end{equation}

Another point here is that the explicit space structure in the system
with global regulation \eqref{model:1} does not provide the
conditions for surviving more than one type of prebiotic replicators, $v_i(x,t)\to 0$ in $L_{p+1}$ for all $i\neq j$.

\begin{corollary}
Almost all spatially non-uniform stationary solutions of the problem
\eqref{model:1} are unstable.
\end{corollary}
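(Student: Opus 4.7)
The plan is to argue by contradiction, leveraging Theorem \ref{th5:1} together with the obvious observation that the vertex states $R_j$ are spatially homogeneous while the stationary solutions in question are not.

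First I would fix a spatially non-uniform stationary solution $u^*(x)=(u_1^*(x),\ldots,u_n^*(x))$ of \eqref{model:1} satisfying \eqref{2:2}. Because $u^*$ is non-uniform, it cannot coincide in $L_{p+1}(\Omega)$ with any of the spatially homogeneous vertex states $R_j=(0,\ldots,0,1,0,\ldots,0)$, so the quantity
\[
\rho \;=\; \min_{1\le j\le n}\,\bigl\|u^*-R_j\bigr\|_{L_{p+1}}
\]
is strictly positive. This $\rho$ is the geometric obstruction that will produce the contradiction.

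Next I would suppose, for contradiction, that $u^*$ is stable (in the $L_{p+1}$ sense compatible with Theorem \ref{th5:1}). Stability would yield, for $\varepsilon:=\rho/2$, some $\delta>0$ such that every admissible initial datum $\varphi$ with $\sum_i\int_\Omega\varphi_i\,dx=1$ and $\|\varphi-u^*\|_{L_{p+1}}<\delta$ produces a solution $v(\cdot,t)$ which remains in the $\varepsilon$-ball around $u^*$ for all $t>0$. In particular such a trajectory satisfies $\|v(\cdot,t)-R_j\|_{L_{p+1}}\ge\rho-\varepsilon=\rho/2>0$ for every $j$ and every $t$, so it is uniformly bounded away from all vertex states in the $L_{p+1}$ norm.

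This is incompatible with Theorem \ref{th5:1}, which tells us that for almost all admissible initial data the trajectory converges in $L_{p+1}$ to some vertex $R_{j(\varphi)}$ (in the sense $v_i(\cdot,t)\to 0$ in $L_{p+1}$ for $i\ne j$ and $\int_\Omega v_j(x,t)\,dx\to 1$). Since the $\delta$-ball around $u^*$ has positive "measure" in the sense used in Theorem \ref{th5:1} (it contains, e.g., an open set of concerted initial mean-values $\bar\varphi_i=\int_\Omega\varphi_i\,dx$ inside the simplex of the local system \eqref{5:1}, whose basin partition leaves only a lower-dimensional exceptional set), this ball must meet the full-measure set of initial data whose trajectories reach a vertex. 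Any $\varphi$ in this intersection violates the supposed stability, giving the contradiction. Repeating the argument for each spatially non-uniform stationary solution yields the corollary.

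The main obstacle I anticipate is the measure-theoretic bookkeeping: transferring the "almost all" statement from the finite-dimensional basin structure of \eqref{5:1} to the infinite-dimensional $L_{p+1}$-ball around $u^*$. The cleanest way is to parametrize perturbations by their spatial averages $\bar\varphi_i$ (as in the decomposition \eqref{5:7} used to prove Theorem \ref{th5:1}): once these averages lie in the interior of one of the basins $D_j$ of the local autocatalytic ODE, the concerted trajectory is forced toward $R_j$ regardless of the higher Fourier modes of the perturbation, and only the measure-zero basin boundaries need be excluded.
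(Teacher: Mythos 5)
There is a genuine gap, and it sits at the heart of your contradiction. You read Theorem \ref{th5:1} as saying that for almost all admissible data the trajectory converges in $L_{p+1}$ to a homogeneous vertex $R_{j}$, and then you derive a contradiction from the fact that a stable non-uniform $u^*$ would keep trajectories at $L_{p+1}$-distance at least $\rho/2$ from every $R_j$. But the theorem asserts only that $v_i(\cdot,t)\to 0$ in $L_{p+1}$ for $i\neq j$ and that $\int_\Omega v_j(x,t)\,dx\to 1$; it says nothing about $v_j(\cdot,t)$ converging to the constant $1$ in $L_{p+1}$. The limiting state can perfectly well be spatially non-uniform with a single nonzero component (this is exactly what the paper's numerics in Fig.~\ref{f3} show and what Remark 5.1 discusses), so a trajectory can stay bounded away from all vertices $R_j$ forever without contradicting Theorem \ref{th5:1}. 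Concretely, take $u^*=U^j(x)=(0,\ldots,0,u_j(x),0,\ldots,0)$ with $u_j$ non-constant: your argument would declare it unstable, yet the constant-in-time trajectory $v\equiv u^*$ satisfies the conclusion of Theorem \ref{th5:1} verbatim, and the paper's next corollary shows precisely these solutions are the stable ones (in the mean integral sense). These one-component solutions are the exceptional family that the word ``almost all'' in the statement is designed to exclude; your proof does not distinguish them, which is a sign the key step is not sound.

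The repair is to use the $L_{p+1}$ norms of the components that must die, not the distance to the vertex set, which is what the paper does. If $u^*$ has at least two nonzero components, pick an index $i\neq j$ with $\int_\Omega u_i^*\,dx>0$; for almost every small perturbation $\varphi=u^*+\beta$, $\beta\in H_\delta$, Theorem \ref{th5:1} gives $v_i(\cdot,t)\to 0$ in $L_{p+1}$ for all $i$ except a single surviving index, so at least one component with $\|u_i^*\|_{L_{p+1}}>0$ is driven to zero and the trajectory leaves a fixed neighborhood of $u^*$, contradicting stability. Hence only stationary solutions with a single nonzero component can be stable, i.e.\ almost all non-uniform stationary solutions are unstable. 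Your closing measure-theoretic remark (parametrizing by the means $\bar\varphi_i$ and excluding the basin boundaries of the local system \eqref{5:1}) is fine and matches the paper's Remark 5.1, but it only matters once the main step is corrected as above.
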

\begin{proof}
Consider \eqref{model:1} with the initial conditions
$$
v_i(x,s)=u_i(x)+\beta_i(x),\quad i=1,\ldots, n,
$$
where $u_i(x)$ are spatially non-uniform stationary solutions to
\eqref{model:1}, and $\beta_i(x)\in H_{\delta}$. From Theorem
\ref{th5:1} it follows that there exists a positive integer $j$
(which depends on the initial conditions) such that $v_i(x,t)\to 0$ in space $L_{p+1}$
for $i\neq j$. Therefore only one set of stationary solutions can be
stable.
\end{proof}

\paragraph{Remark 5.2.} It is possible to obtain sufficient
conditions for stability of the non-uniform stationary solution
$u_j(x)>0,\,\int_{\Omega}u_j(x)dx=1$. Unfortunately, applying this
condition requires additional serious analysis.

Indeed, we can look for a solution to \eqref{model:1} in the form
$$
v_j(x,t)=u_j(t)+z_j(x,t),\quad v_i(x,t)=z_i(x,t),\,i\neq j.
$$
Putting these solutions into \eqref{model:1} and retaining only
linear terms we obtain
\begin{equation*}
\begin{split}
\partial_t
z_j(x,t) &=a_j(p+1)[u_j^p(x)z_j(x,t)-\langle
u_j^p(x),z_j(x,t)\rangle]-a_jz_j(x,t)\langle
u_j^{p+1}(x),1\rangle+d_i\Delta
z_j(x,t),\\
\partial_t
z_i(x,t) &= -a_iz_i(x,t)\langle u_i^{p+1}(x),1\rangle+d_i\Delta
z_i(x,t),\quad i\neq j,
\end{split}
\end{equation*}
with the initial conditions $z_i(x,s)=u_i(x)+\beta_i(x)$. Here
$\langle u(x),v(x)\rangle$ denotes the usual scalar product in
$L_2(\Omega)$. This implies that all $z_i(x,t)\to 0$ for $i\neq j$
when $t\to\infty$. On the other hand we have
\begin{equation}\label{add}
    \begin{split}
    \frac{1}{2}\frac{d}{dt} \int_{\Omega}z_j^2(x,t)dx& =a_j(p+1)[\langle u_j^p(x), z_j^2(x,t)\rangle-\langle u_j(x),z_j(x,t)\rangle\langle u_j^p(x),z_j(x,t)\rangle]\\
        &{ } -a_j\langle z_j(x,t),z_j(x,t)\rangle \langle u_j^{p+1},
        1\rangle+d_i\langle \Delta z_j(x,t), z_j(x,t)\rangle.
\end{split}
\end{equation}
Substituting the following
$$
z_j(x,t)=z_j^0(t)+\sum_{s=1}^\infty z_j^s(t)\psi_s(x)
$$
into \eqref{add} and using the fact that
$$
\langle \Delta z_j(x,t),z_j(x,t)\rangle=-\sum_{s=1}^\infty
\lambda_s(z_j^s(t))^2
$$
we obtain that all the terms in \eqref{add} except for the terms in
the square brackets are negative. The terms in the square brackets have
the following form
$$
\alpha=(p+1)\sum_{m=1}^\infty\sum_{s=1}^\infty
a_jz_j^m(t)z_j^s(t)(\langle u_j^p(x),
\psi_m(x)\psi_s(x)\rangle-\langle u_j(x),\psi_m(x)\rangle \langle
u_j^p(x),\psi_s(x)\rangle),
$$
from which we obtain a sufficient condition for stability of the
solution $u_j(x)>0$ in the form
$$
\alpha<\sum_{m=1}^\infty(z_j^m(t))^2(\lambda_m+a_j\bar{u}_j^p).
$$
The last formula should be checked only for small $m$ because $\lambda_m\to\infty$.

%%%%%%%%%%%%%%%%%%%%%%%%%%%%%%%%%%%%%%
%%% added text %%%%%%%%%%%%%%%%%%%%%%%
%%%%%%%%%%%%%%%%%%%%%%%%%%%%%%%%%%%%%%
The result of Lemma \ref{l5:1} can be extended to the case of hypercycle reaction.
\begin{lemma}\label{l5:1n}
Let the initial conditions of system \eqref{model:2} and system
\begin{equation}\label{5:n:1}
\begin{split}
    \frac{dw_i}{dt}&=w_i(a_iw_{i-1}^p-f_2^{loc}(t)),\quad
    f_2^{loc}(t)=\sum_{i=1}^na_iw_iw_{i-1}^{p},\quad t>0,\\
    w_i(0)&=\xi_i,\quad i=1,\ldots,n, \quad \sum_{i=1}^nw_i=1
\end{split}
\end{equation}
be concerted. Then
\begin{equation}\label{5:n:2}
    f_2^{loc}(t)\leq f_2(t).
\end{equation}
\end{lemma}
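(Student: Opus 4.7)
The plan is to mirror the proof of the second inequality in Lemma~\ref{l5:1}, replacing the autocatalytic term $v_i^{p+1}$ by the cross term $v_{i-1}^p v_i$. First I would integrate the hypercyclic PDE system \eqref{model:2} over $\Omega$. The Neumann condition \eqref{bnd:1} annihilates the Laplacian, yielding
\begin{equation*}
    \frac{d\bar v_i}{dt}=a_i\int_{\Omega} v_{i-1}^p(x,t)\,v_i(x,t)\,dx-\bar v_i(t)\,f_2(t),\qquad \bar v_i(0)=\bar\varphi_i=\xi_i,
\end{equation*}
which I would pair with the companion ODE $\dot w_i=a_iw_{i-1}^pw_i-w_if_2^{loc}(t)$ from \eqref{5:n:1}, whose initial data coincide with those of $\bar v_i$ by concertedness.

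Then I would argue by contradiction. Assume there is some $s\ge 0$ with $f_2(s)<f_2^{loc}(s)$. Continuity of both functionals in $t$ provides an interval $U_\delta=\{t:\,0\le t-s<\delta\}$ on which $f_2(t)<f_2^{loc}(t)$, so on $U_\delta$,
\begin{equation*}
    \frac{d\bar v_i}{dt}\ge a_i\int_{\Omega} v_{i-1}^p v_i\,dx-\bar v_i(t)\,f_2^{loc}(t).
\end{equation*}
If the Jensen-type lower bound $a_i\int_{\Omega} v_{i-1}^p v_i\,dx\ge a_i\bar v_{i-1}^p\bar v_i$ (or at least its $i$-summed version) could be established, the remainder of the argument would be mechanical: the displayed inequality would dominate the ODE for $w_i$ term by term, the comparison theorem \cite{tikhonov1985de} with matched initial data would give $\bar v_i(t)\ge w_i(t)$ on $U_\delta$ with strict inequality for $t>s$, and summing over $i$ would produce the contradiction $1=\sum_i\bar v_i(t)>\sum_i w_i(t)=1$, using the integral invariant \eqref{2:1}.

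The main obstacle, and the step where the hypercycle proof genuinely departs from the autocatalytic one, is precisely this Jensen-type lower bound. In Lemma~\ref{l5:1} the inequality $\int_{\Omega}v_i^{p+1}\,dx\ge\bar v_i^{p+1}$ is a one-line application of Jensen to the convex function $t\mapsto t^{p+1}$. Here the integrand $(u,v)\mapsto u^p v$ is not jointly convex on $\mathbb R_+^2$: for initial profiles with nearly disjointly supported $\varphi_{i-1}$ and $\varphi_i$ one can easily arrange $\int v_{i-1}^p v_i\,dx<\bar v_{i-1}^p\bar v_i$, so the per-index inequality fails without extra assumptions. The route I would pursue is to recover the $i$-summed inequality by a weighted-Jensen step (taking $v_i/\bar v_i$ as a probability density, which for $p\ge1$ gives $\int v_{i-1}^p v_i\,dx\ge\bar v_i^{\,1-p}(\int v_{i-1}v_i\,dx)^p$) combined with a H\"older estimate across the cyclic index structure $i-1\to i$, so that excesses from one index compensate defects at its neighbour. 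If this closes, the contradiction argument proceeds as above; if not, the lemma most likely requires an additional restriction on the spatial oscillations of the $\varphi_i$, which should be made explicit in the hypothesis.
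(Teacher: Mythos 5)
You have correctly located the obstacle, but the way you try to get around it does not work, and the lemma does not in fact require any extra hypothesis on the initial profiles. The per-index bound $a_i\int_{\Omega}v_{i-1}^p v_i\,dx\ge a_i\bar v_{i-1}^p\bar v_i$ is, as you note, false in general, and the proposed salvage (weighted Jensen with density $v_i/\bar v_i$ plus a H\"older estimate letting ``excesses'' at one index compensate ``defects'' at its cyclic neighbour) is never actually carried out; there is no reason such a compensation holds for arbitrary nonnegative profiles, so the contradiction argument you sketch cannot be closed along this route. In short, mirroring Lemma \ref{l5:1} at the level of $\frac{d\bar v_i}{dt}$ stalls exactly at the step you flag, and the proposal stops at a conjecture.

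The missing idea in the paper's proof is to divide equation \eqref{model:2} by $v_i$ \emph{before} integrating, i.e.\ to work with $\partial_t\ln v_i=a_iv_{i-1}^p-f_2(t)+d_i\frac{\Delta v_i}{v_i}$. This removes the troublesome factor $v_i$ from the reaction term altogether: integrating over $\Omega$ and using the Neumann condition \eqref{bnd:1} together with
\begin{equation*}
\int_{\Omega}\frac{\Delta v_i}{v_i}\,dx=\int_{\Gamma}\frac{1}{v_i}\frac{\partial v_i}{\partial \textbf{n}}\,ds+\int_{\Omega}|\nabla_x\ln v_i|^2\,dx\ \ge\ 0,
\end{equation*}
and the ordinary Jensen inequality for the single function $v_{i-1}$ (valid for $p\ge1$, $|\Omega|=1$),
\begin{equation*}
\int_{\Omega}v_{i-1}^p(x,t)\,dx\ \ge\ \Bigl(\int_{\Omega}v_{i-1}(x,t)\,dx\Bigr)^{p}=\bar v_{i-1}^{\,p}(t),
\end{equation*}
one obtains $\int_{\Omega}\partial_t\ln v_i\,dx\ \ge\ a_i\bar v_{i-1}^{\,p}(t)-f_2(t)$. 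Comparing this with the logarithmic form of the local system \eqref{5:n:1}, $\frac{d}{dt}\ln w_i=a_iw_{i-1}^p-f_2^{loc}(t)$, with concerted (hence equal) initial data then yields \eqref{5:n:2}; no convexity of the joint integrand $(u,v)\mapsto u^pv$ is ever needed, and no restriction on the spatial oscillations of the $\varphi_i$ enters. If you want to repair your write-up, replace the integration of the equation for $\bar v_i$ by the integration of the equation for $\ln v_i$ and the rest of your comparison-type argument goes through essentially unchanged.
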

\begin{proof}
We have
$$
\int_{\Omega}\frac{\Delta v_i}{v_i}\,dx=\int_{\Gamma}\frac{\partial v_i}{\partial \textbf{n}}\frac{1}{v_i}\,ds+\int_{\Omega}|\nabla_x\ln v_i|^2dx\geq 0,
$$
and
$$
\int_{\Omega}v^p_{i-1}(x,t)dx\geq \left(\int_\Omega v_{i-1}(x,t)dx\right)^p=\bar{v}_{i-1}^p.
$$
Therefore
$$
\int_{\Omega}\frac{\partial}{\partial t}\ln v_i\,dx\geq a_i\bar{v}^p_{i-1}(t)-f_2(t).
$$
Since the initial conditions of \eqref{model:2} and \eqref{5:n:1} are concerted, then, as in the case of Theorem \eqref{th5:1} we can represent $v_i(x,t)$ as the sum $v_i(x,t)=w_i(t)+z_i(x,t)$, where $z_i(x,t)$ are given by \eqref{5:7}, and note that $w_i(0)=\bar{v}_i(0)$ for any $i$.

From the last inequality it follows that
$$
\int_{\Omega}\frac{\partial}{\partial t}\ln v_i \, dx=\frac{d}{dt}w_i(t)\geq a_i\bar{v}_{i-1}^p-f_2(t).
$$
Since $\bar{v}_i(0)=w_i(0)$ then, using \eqref{5:n:1} we obtain \eqref{5:n:2}.
\end{proof}

Using the last lemma we can extend the results of \textit{permanence} of hypercycle system with $p=1$ to the spatially explicit case \cite{hofbauer1998ega}. We remind that permanence means that solutions to system \eqref{5:n:1} with the initial conditions $w_i(0)=\xi_i>0$ do not vanish, i.e.,
$$
1>w_i(t)>\delta>0, \quad t>0.
$$
\begin{corollary}Let $p=1$ and let the initial conditions of systems \eqref{model:2} and \eqref{5:n:1} be concerted, and
$$
\bar{\varphi_i}=\xi_i=w_i(0)>0,\quad \sum\xi_i=1.
$$
Then the solutions to system \eqref{model:2} do not vanish in $L_2$ space.
\end{corollary}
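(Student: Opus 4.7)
The plan is to reduce the desired $L_2$ lower bound to a uniform positive lower bound on the spatial average $\bar v_i(t) := \int_\Omega v_i(x,t)\,dx$, and then obtain that lower bound by comparing with the concerted ODE \eqref{5:n:1} and invoking the classical permanence theorem for the $p=1$ hypercycle. The first reduction is immediate: since $|\Omega|=1$, Cauchy--Schwarz gives $\bar v_i(t)\le \|v_i(\cdot,t)\|_{L_2}$, so any uniform bound $\bar v_i(t)\ge \delta>0$ transfers to the $L_2$ norm.

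The first step is to extract from the proof of Lemma~\ref{l5:1n} the intermediate inequality
$$\frac{d}{dt}\int_\Omega \ln v_i(x,t)\,dx \;\ge\; a_i\bar v_{i-1}(t)-f_2(t),\qquad i=1,\ldots,n,$$
valid for $p=1$, along with the Jensen bound $\int_\Omega \ln v_i\,dx\le \ln\bar v_i(t)$. Summing with the weights $1/a_i$ and using $\sum_i\bar v_{i-1}(t)=1$ yields $\frac{d}{dt}\sum_i(1/a_i)\int_\Omega\ln v_i\,dx\ge 1-\beta f_2(t)$, which is the spatial analogue of the Lyapunov identity $\frac{d}{dt}\sum_i(1/a_i)\ln w_i(t)=1-\beta f_2^{loc}(t)$ underlying the classical permanence proof for the $p=1$ hypercycle ODE \cite{hofbauer1998ega}. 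Lemma~\ref{l5:1n} supplies $f_2^{loc}(t)\le f_2(t)$, and the concerted initial condition supplies $\bar v_i(0)=w_i(0)=\xi_i>0$.

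The second step is a contradiction argument in the spirit of the proof of Lemma~\ref{l5:1}. Permanence of the ODE furnishes a uniform $\delta>0$ with $w_i(t)\ge \delta$. Assuming for contradiction that $\bar v_j(t_\star)<\delta$ for some $j$ and some $t_\star>0$, one combines the logarithmic inequality above with its ODE counterpart, the bound $f_2^{loc}\le f_2$, and the simplex constraint $\sum_i\bar v_i(t)=\sum_i w_i(t)=1$ to force $\bar v_i(t)>w_i(t)$ for every $i$ on a subinterval near $t_\star$, which contradicts the common normalization. Hence $\bar v_i(t)\ge\delta$ for all $t$, giving the $L_2$ statement.

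The main obstacle is the cross-coupling of the hypercycle: the averaged equation $\dot{\bar v}_i=a_i\int_\Omega v_i v_{i-1}\,dx-\bar v_i f_2(t)$ involves the spatial covariance of $v_i$ and $v_{i-1}$, which has no a priori sign, so a direct termwise comparison $\bar v_i\ge w_i$ of the kind available in the autocatalytic setting is not at our disposal. The argument must instead be closed at the level of the weighted Lyapunov sum $\sum_i(1/a_i)\int_\Omega \ln v_i\,dx$, and the delicate point is that Lemma~\ref{l5:1n} enters this estimate with an ``unfavourable'' sign; this has to be absorbed through the identity $\sum_i\bar v_{i-1}=1$ and the uniform positivity of $w_i$ provided by Hofbauer's theorem.
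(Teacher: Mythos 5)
There is a genuine gap, and it sits exactly where you flag the ``unfavourable sign.'' Your reduction $\bar v_i(t)\le\|v_i(\cdot,t)\|_{L_2}$ is fine, and the scheme of bounding $\sum_i a_i^{-1}\int_\Omega\ln v_i\,dx$ from below would indeed suffice (each summand is $\le\ln\bar v_i\le 0$, so a lower bound on the sum controls every $\bar v_i$). But the differential inequality you actually have, $\frac{d}{dt}\sum_i a_i^{-1}\int_\Omega\ln v_i\,dx\ge 1-\beta f_2(t)$, involves the \emph{distributed} functional $f_2(t)=\sum_i a_i\int_\Omega v_{i-1}v_i\,dx$, and nothing in the paper bounds $f_2$ from above: Lemma~\ref{l5:1n} gives $f_2^{loc}\le f_2$, which is the wrong direction, and the normalization \eqref{2:1} is only an $L_1$ constraint, so $f_2$ can be arbitrarily large under spatial concentration. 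Neither $\sum_i\bar v_{i-1}=1$ nor the positivity of the $w_i$ supplies such an upper bound, so the proposed ``absorption'' is not an argument. The second step is likewise unsupported: a single scalar inequality for the weighted sum of logarithms cannot ``force $\bar v_i(t)>w_i(t)$ for every $i$'' near $t_\star$; the componentwise comparison that drives the contradiction in Lemma~\ref{l5:1} is exactly what you correctly identify as unavailable here, because $\frac{d\bar v_i}{dt}=a_i\int_\Omega v_iv_{i-1}\,dx-\bar v_i f_2(t)$ contains a covariance term of indefinite sign. So as written the proof does not close; note also that you are attempting a stronger conclusion (a uniform lower bound on $\bar v_i$) than the statement requires.

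The paper closes the argument differently, and the missing idea is an identity rather than a Lyapunov estimate. Arguing by contradiction, assume $\|v_i(\cdot,t)\|_{L_2}\to 0$ for some $i$; writing $v_i=w_i(t)+z_i(x,t)$ as in Theorem~\ref{th5:1} with concerted data and integrating \eqref{model:2} over $\Omega$ yields
\begin{equation*}
a_i\int_\Omega v_i(x,t)v_{i-1}(x,t)\,dx=\bigl(f_2(t)-f_2^{loc}(t)\bigr)w_i(t)+a_iw_i(t)w_{i-1}(t),
\end{equation*}
in which Lemma~\ref{l5:1n} enters with the \emph{favourable} sign (it multiplies the nonnegative $w_i$), giving $\int_\Omega v_iv_{i-1}\,dx\ge w_iw_{i-1}$. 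Cauchy--Schwarz then gives $\|v_i\|_{L_2}\|v_{i-1}\|_{L_2}\ge w_iw_{i-1}\ge\delta^2$ by permanence of the local hypercycle \eqref{5:n:1}, contradicting $\|v_i\|_{L_2}\to 0$. This identity is what simultaneously disposes of the covariance problem and of the direction of the inequality in Lemma~\ref{l5:1n}; without it (or a genuine upper bound on $f_2$), your route does not yield the corollary.
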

\begin{proof}
Let a solution $v_i(x,t)$ to \eqref{model:2} vanish for some $i$, i.e.,
$$
\|v_i(x,t)\|_{L_2}\to 0, \quad t\to\infty.
$$
Using the reasoning along the lines of Theorem \ref{th5:1}, we obtain
$$
a_i\int_{\Omega}v_i(x,t)v_{i-1}(x,t)\,dx= (f_2(t)-f_2^{loc}(t))w_i(t)+a_iw_i(t)w_{i-1}(t).
$$
Using Lemma \ref{l5:1n} we hence have
$$
\int_{\Omega}v_i(x,t)v_{i-1}(x,t)\,dx\geq w_i(t)w_{i-1}(t).
$$
The last and the Cauchy inequalities yield
$$
\|v_i(x,t)\|_{L_2}\|v_{i-1}(x,t)\|_{L_2}\geq w_i(t)w_{i-1}.
$$
From the fact $\|v_i(x,t)\|_{L_2}\to 0$ it follows that either $w_i$ or $w_{i-1}$ tend to zero, which contradicts to the permanence of the hypercycle system \eqref{5:n:1}. This completes the proof.
\end{proof}
%%%%%%%%%%%%%%%%%%%%%%%%%%%%%%%%%%%%%%
%%% added text %%%%%%%%%%%%%%%%%%%%%%%
%%%%%%%%%%%%%%%%%%%%%%%%%%%%%%%%%%%%%%

Similar to Remark 5.2 we can obtain sufficient conditions for stability of the spatially nonhomogeneous stationary solutions for the hypercycle system \eqref{model:2}. However, the utility of such conditions is questionable because we hardly can expect that we will be able to check these conditions analytically.

It is possible to study the stability of spatially nonhomogeneous solutions in somewhat weaker sense.

\begin{definition}
We shall say that spatially non-uniform stationary solution
$u(x)=(u_1(x),\ldots,u_n(x))$ to system \eqref{model:1} or
\eqref{model:2} is stable in the sense of the mean integral value
if for any $\varepsilon>0$ there exists $\delta>0$ such that for
the initial conditions
$$
|\bar{\varphi}_i-\bar{u}_i|<\delta,
$$
it follows that
$$
|\bar{v}_i-\bar{u}_i|<\varepsilon,
$$
for any $i$ and $t>0$, where, as before, $v_i(x,t)$ are the
solutions of \eqref{model:1} or \eqref{model:2},
$$
\bar{v}_i=\int_\Omega v_i(x,t)dx,\quad\bar{\varphi}_i=\int_\Omega
\varphi_i(x)dx,\quad \bar{u}_i=\int_\Omega u_i(x)dx.
$$
\end{definition}

It is clear that the stability in the mean integral sense is weaker
than the stability in the usual sense (Lyapunov stability). For example,
consider functions $g(x,t)\in H_2^1,\,x\in[0,1]$
$$g(x,t)=c_0(t)+\sum_{s=1}^\infty c_k(t)\cos k\pi x.$$
Let us suppose that $c_0(t)\to 0$ when $t\to\infty$. Then
$\bar{g}(t)\to0$ whereas $\|g(x,t)\|^2_{H_2^1}=\sum_{s=1}^\infty
c_k^2(t)(1+k^2\pi^2)$ does not necessarily tend to zero.

\begin{corollary}
Let us suppose that the following inequalities hold for any
$i=1,\ldots,n$:
\begin{equation}\label{5:10}
    \frac{d_i}{a_i}<\frac{p}{\lambda_1}\,.
\end{equation}
Then all spatially non-uniform stationary solutions to
\eqref{model:1} of the form $$U^j(x)=(0,\ldots,0,u_j(x),0,\ldots,0)$$
are stable in the mean integral sense.
\end{corollary}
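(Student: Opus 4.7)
The plan is to reduce mean-integral stability of $U^j(x)$ to the classical Lyapunov stability of the vertex $R_j=(0,\ldots,0,1,0,\ldots,0)$ of the local ODE system \eqref{5:1}, via concerted initial conditions and the coupling identity \eqref{5:8} from the proof of Theorem~\ref{th5:1}. Since $\int_\Omega u_j(x)\,dx=1$, the mean vector of $U^j$ is precisely $R_j$, so the claim reduces to: for every $\varepsilon>0$ produce $\delta>0$ such that $|\bar{\varphi}_i-\delta_{ij}|<\delta$ for all $i$ implies $|\bar{v}_i(t)-\delta_{ij}|<\varepsilon$ for all $i$ and $t>0$. I would also note that hypothesis \eqref{5:10} at the index $j$ is exactly the single-component existence condition of Theorem~\ref{th4:1}, so the non-uniform components $u_j(x)$ being analyzed do exist.

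Next I would introduce concerted initial data $\xi_i=\bar{\varphi}_i$ for \eqref{5:1} and let $w(t)=(w_1(t),\ldots,w_n(t))$ be the resulting ODE solution. Because \eqref{5:1} exhibits competitive exclusion on the simplex for $p>0$, every vertex $R_j$ is asymptotically, hence Lyapunov, stable. Therefore, given $\varepsilon>0$ there is $\delta>0$ so that $|\xi_i-\delta_{ij}|<\delta$ forces $|w_i(t)-\delta_{ij}|<\varepsilon$ uniformly in $t\geq 0$.

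The pivotal step is to show $\bar{v}_i(t)\equiv w_i(t)$ under concerted initial data. Integrating \eqref{model:1} with the Neumann condition yields $\tfrac{d\bar{v}_i}{dt}=a_i\int_\Omega v_i^{p+1}(x,t)\,dx-f_1(t)\bar{v}_i$; subtracting the local ODE $\tfrac{dw_i}{dt}=a_iw_i^{p+1}-f_1^{loc}(t)w_i$ and substituting \eqref{5:8} collapses the right-hand side to $-f_1(t)(\bar{v}_i-w_i)$. The resulting linear homogeneous ODE with vanishing initial data forces $\bar{v}_i(t)\equiv w_i(t)$, whence $|\bar{v}_i(t)-\delta_{ij}|=|w_i(t)-\delta_{ij}|<\varepsilon$ for all $i$ and $t>0$, which is exactly the stability in the mean integral sense.

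The main obstacle is the identity \eqref{5:8} itself: Jensen's inequality only gives $\overline{v_i^{p+1}}\geq\bar{v}_i^{p+1}$, so the nonlinear reaction $v_i^{p+1}$ does not commute with spatial averaging, and the identity has to be extracted from the mode decomposition $v_i=w_i+z_i$ together with the orthogonality $\int_\Omega\psi_s(x)\,dx=0$ ($s\geq 1$) invoked in the proof of Theorem~\ref{th5:1}. Once that identity is accepted, the remainder of the argument is routine: the classical simplex-vertex Lyapunov stability of the replicator dynamics transfers verbatim to the integrated PDE, and condition \eqref{5:10} enters only to guarantee that the family $U^j(x)$ whose stability is asserted is non-empty.
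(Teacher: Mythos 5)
Your argument is correct modulo the same ingredient the paper itself leans on, and it follows the paper's overall strategy---reducing the behaviour of the spatial means $\bar v_i(t)$ to the local system \eqref{5:1} through concerted initial data---but it executes a sharper version of it. The paper's own proof of this corollary is two lines: it invokes Theorem \ref{th3:1} (under \eqref{5:10} the homogeneous states $R_j$ are unstable, so the non-uniform $U^j$ are the relevant objects) and then cites the asymptotic statement \eqref{5:9} of Theorem \ref{th5:1}; strictly speaking \eqref{5:9} gives only attraction of the means as $t\to\infty$, not the for-all-$t$ $\varepsilon$--$\delta$ bound demanded by the definition of stability in the mean integral sense. You close that distance: subtracting \eqref{5:1} from the integrated equation and inserting \eqref{5:8} gives $\frac{d}{dt}(\bar v_i-w_i)=-f_1(t)(\bar v_i-w_i)$, hence $\bar v_i(t)\equiv w_i(t)$ under concerted data, and the Lyapunov stability of the vertex $R_j$ for the local system (the multistability already used in the proof of Theorem \ref{th5:1}) then yields the estimate for every $t>0$. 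Two caveats. First, as you note yourself, everything hinges on \eqref{5:8}; your computation in fact shows that \eqref{5:8} is \emph{equivalent} to $\bar v_i\equiv w_i$, and the paper's derivation of \eqref{5:8} silently discards the zero-mode coefficients $c_i^0(t)$ in the expansion \eqref{5:7}, so your proof inherits exactly the same weak point as the paper's---no new gap, but no repair either. Second, your reading of \eqref{5:10} (non-emptiness of the family $U^j$ via Theorem \ref{th4:1}) differs slightly from the paper's (instability of the homogeneous $R_j$ via Theorem \ref{th3:1}); both are consistent with the statement, though Theorem \ref{th4:1} covers only the one-dimensional case with $0<p\le 2$, whereas the corollary simply presupposes that solutions of the form $U^j$ exist.
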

\begin{proof}
From Theorem \ref{th3:1} it follows that $R_j$ are unstable when
\eqref{5:10} holds. Consider the solution $U^j(x)$ for which
$\bar{u}_j=1$. From the other hand from Theorem \eqref{th5:1}
follows \eqref{5:9}, which completes the proof.
\end{proof}

Now we switch to the hypercycle system \eqref{model:2} with explicit
spatial structure and global regulation. After integrating
\eqref{model:2} with respect to spatial variable, we obtain
\begin{equation}\label{5:11}
    \frac{\bar{v}_i(t)}{dt}=a_i\langle
    v_{i-1}^p,v_i\rangle-f_2(t)\bar{v}_{i}(t),\quad 0<t,\quad
    v_i(0)=\bar{\varphi}_i,
\end{equation}
where the meaning of the function $\bar{g}$ as before, in the mean
integral sense.

Let us introduce new functions
\begin{equation}\label{5:12}
    v_i(x,t)=\frac{w_i(x,t)}{(a_i)^\frac{1}{p}}\Theta,\quad
    i=1,\ldots, n,\quad
    \Theta=\int_{\Omega}\sum_{j=1}^{n}(a_j)^\frac{1}{p}v_j(x,t)dx.
\end{equation}
For the new variables
\begin{equation}\label{5:13}
    \sum_{j=1}^{n}\int_\Omega w_j(x,t)dx=1.
\end{equation}
Note that in the new variables the stationary point $P_2$ has the
coordinates $(1/n,1/n,\ldots,1/n)$.

\begin{lemma}\label{l5:2}
In the new variables \eqref{5:12} the dynamical system \eqref{5:11}
has the following form:
\begin{equation}\label{5:14}
\begin{split}
    \frac{d\bar{w}_i(t)}{dt} & =\Theta^p(t)(\langle w_{i-1}^p,w_i \rangle-f_2(t)\bar{w}_i(t)),\\
       f_2(t) & = \sum_{j=1}^{n}\int_\Omega w_{j}(x,t)w_{j-1}^p(x,t)dx.
\end{split}
\end{equation}
\end{lemma}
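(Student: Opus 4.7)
The proof is a direct computation tracing how each ingredient of \eqref{5:11} transforms under the time-dependent rescaling \eqref{5:12}. My plan has three steps: extract a chain-rule identity linking $d\bar w_i/dt$ to $d\bar v_i/dt$ and $d\Theta/dt$; rewrite $d\bar v_i/dt$ in the new variables using \eqref{5:11}; and compute $d\Theta/dt$ by differentiating the defining relation for $\Theta$.

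For the chain rule, integrating \eqref{5:12} over $\Omega$ gives $\bar v_i(t)=\Theta(t)\bar w_i(t)/a_i^{1/p}$, so differentiation in $t$ and solving for the new derivative yields
\begin{equation*}
\frac{d\bar w_i}{dt}=\frac{a_i^{1/p}}{\Theta}\frac{d\bar v_i}{dt}-\frac{1}{\Theta}\frac{d\Theta}{dt}\,\bar w_i.
\end{equation*}
Substituting \eqref{5:11} for $d\bar v_i/dt$ and using the conversions $v_{i-1}^p\,v_i=\Theta^{p+1}\,w_{i-1}^p w_i/(a_{i-1}a_i^{1/p})$ and $\bar v_i=\Theta\bar w_i/a_i^{1/p}$, the first contribution on the right-hand side rewrites as $\Theta^p\langle w_{i-1}^p,w_i\rangle-f_2(t)\bar w_i$. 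Simultaneously, differentiating $\Theta=\sum_j a_j^{1/p}\bar v_j$, substituting \eqref{5:11} for each $d\bar v_j/dt$, and invoking $\sum_j\bar w_j=1$ from \eqref{5:13} produces
\begin{equation*}
\frac{1}{\Theta}\frac{d\Theta}{dt}=\Theta^p\sum_{j=1}^n\langle w_{j-1}^p,w_j\rangle-f_2(t).
\end{equation*}

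The decisive cancellation is now visible: the $-f_2(t)\bar w_i$ coming from the first right-hand term and the $+f_2(t)\bar w_i$ coming from $-(1/\Theta)(d\Theta/dt)\bar w_i$ cancel exactly, leaving
\begin{equation*}
\frac{d\bar w_i}{dt}=\Theta^p\Bigl[\langle w_{i-1}^p,w_i\rangle-\bar w_i\sum_{j=1}^n\langle w_{j-1}^p,w_j\rangle\Bigr],
\end{equation*}
which is exactly \eqref{5:14} with the new, coefficient-free multiplier $f_2(t)=\sum_j\langle w_{j-1}^p,w_j\rangle$. The main obstacle is the self-referential character of the rescaling: $\Theta$ is built from the $v_j$ and the $v_j$ are themselves rescaled by $\Theta$, so one must bookkeep the normalizing constants $a_j^{1/p}$ very carefully in the second and third steps. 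The key observation that makes the lemma come out so clean is that the original multiplier $f_2$ in \eqref{5:11} is precisely what is required to match the logarithmic derivative of $\Theta$; it is therefore entirely absorbed into $d\Theta/dt$ under the change of variables, and what remains in the $w$-dynamics is only the simpler hypercyclic nonlinearity scaled by $\Theta^p$.
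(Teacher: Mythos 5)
Your strategy coincides with the paper's own proof: differentiate the integrated relation $\bar v_i=\Theta\,\bar w_i/b_i$ coming from \eqref{5:12}, substitute \eqref{5:11} for $d\bar v_i/dt$, compute $d\Theta/dt$ by the same substitution (the paper obtains it by integrating \eqref{model:2} directly, which amounts to the same computation), and let the original multiplier $f_2$ cancel against the logarithmic derivative of $\Theta$. The gap sits exactly in the step you yourself call the main obstacle, the bookkeeping of the constants. Taking the weights literally as in \eqref{5:12}, i.e.\ $b_i=a_i^{1/p}$, your own conversion $v_{i-1}^pv_i=\Theta^{p+1}w_{i-1}^pw_i/(a_{i-1}a_i^{1/p})$ gives
\[
\frac{a_i^{1/p}}{\Theta}\,a_i\langle v_{i-1}^p,v_i\rangle=\frac{a_i}{a_{i-1}}\,\Theta^p\langle w_{i-1}^p,w_i\rangle ,
\qquad
\frac{1}{\Theta}\frac{d\Theta}{dt}=\Theta^p\sum_{j=1}^n\frac{a_j}{a_{j-1}}\langle w_{j-1}^p,w_j\rangle-f_2(t),
\]
so after the cancellation you are left with
\[
\frac{d\bar w_i}{dt}=\Theta^p\Bigl[\tfrac{a_i}{a_{i-1}}\langle w_{i-1}^p,w_i\rangle-\bar w_i\sum_{j=1}^n\tfrac{a_j}{a_{j-1}}\langle w_{j-1}^p,w_j\rangle\Bigr],
\]
which is \eqref{5:14} only when all the $a_i$ coincide. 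Your assertion that the first contribution ``rewrites as $\Theta^p\langle w_{i-1}^p,w_i\rangle-f_2(t)\bar w_i$'' is therefore inconsistent with the conversion identity you state two lines earlier; for general rates the ratios $a_i/a_{i-1}$ survive and the clean form \eqref{5:14} does not follow.

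The repair is an index shift in the weights: one needs $b_{i-1}^{\,p}=a_i$, i.e.\ $b_i=a_{i+1}^{1/p}$, so that $b_i\,a_iv_{i-1}^pv_i=\Theta^{p+1}w_{i-1}^pw_i$ is coefficient-free. This is what the paper's proof actually uses: note the factors $(a_{j+1})^{1/p}$ in its computation of $d\Theta/dt$ and the denominators $(k_{i+1})^{1/p}$ in \eqref{5:16}--\eqref{5:17}; the unshifted exponent printed in \eqref{5:12} is evidently a typo, as is also confirmed by the remark that $P_2$ becomes $(1/n,\ldots,1/n)$, which holds for the shifted weights but not for the unshifted ones. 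With $v_i=w_i\Theta/a_{i+1}^{1/p}$ and $\Theta=\int_\Omega\sum_{j}a_{j+1}^{1/p}v_j\,dx$ your three steps go through verbatim and yield \eqref{5:14}.
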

\begin{proof}
Using \eqref{5:12}, \eqref{model:2} and the boundary conditions we
obtain

\begin{equation*}
\begin{split}
\frac{d\Theta}{dt}&=\sum_{j=1}^{n}\int_\Omega(a_jv_jv_{j-1}^p-f_2(t)v_j+d_j\Delta
v_j)(a_{j+1})^\frac{1}{p}dx\\&=\Theta^{p+1}\sum_{j=1}^{n}\int_\Omega
 w_j w_{j-1}^p
dx-f_2(t)\Theta\sum_{j=1}^{n}\int_\Omega w_jdx
\end{split}
\end{equation*}
Using \eqref{5:13} we obtain
\begin{equation}\label{5:15}
\frac{d\Theta}{dt}=\Theta^{p+1}\sum_{j=1}^{n}\int_\Omega
w_jw_{j-1}^pdx-f_2(t)\Theta.
\end{equation}
Equality \eqref{5:12} yields
\begin{equation}\label{5:16}
    \frac{d\bar{v}_i}{dt}=\frac{\dot{\bar{w}}_i\Theta+\bar{w}_i\dot{\Theta}}{(k_{i+1})^\frac{1}{p}}\,.
\end{equation}
From the other hand \eqref{5:12} implies
\begin{equation}\label{5:17}
    \frac{d\bar{v}_i}{dt}=\frac{\Theta^{p+1}\langle w_i,w_{i-1}^p\rangle-\Theta \bar{w}_if_2(t)}{(k_{i+1})^\frac{1}{p}}\,.
\end{equation}
Putting together \eqref{5:15},\eqref{5:16} and \eqref{5:17}
completes the proof.
\end{proof}

Consider the spatially uniform stationary solution $P_2$ to \eqref{model:2}.
It is also an equilibrium of \eqref{5:11}.

\begin{theorem}
Let $u(x)=(u_1(x),\ldots,u_n(x))$ be a spatially non-uniform
stationary solution to \eqref{model:2} such that $\bar{u}=P_2$,
where $P_2$ is the homogeneous stationary solution of
\eqref{model:2}. Then $u(x)$ is stable in the sense of the mean
integral value.
\end{theorem}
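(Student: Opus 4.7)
The strategy is to reduce the stability question to a classical Lyapunov-stability statement for the interior equilibrium of the finite-dimensional hypercycle \eqref{5:n:1}. The two key ingredients are the rescaled change of variables of Lemma \ref{l5:2}, which transplants the equilibrium $P_2$ to the symmetric point $(1/n,\ldots,1/n)$, and Lemma \ref{l5:1n}, which already relates the reaction terms of the full PDE to those of the mean-field ODE when the initial data are concerted.

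First I would integrate the hypercycle PDE \eqref{model:2} over $\Omega$: the Neumann boundary condition annihilates the diffusion terms, producing the integrated system \eqref{5:11} and, under the substitution \eqref{5:12}, the rescaled form \eqref{5:14} supplied by Lemma \ref{l5:2}. At the non-uniform stationary solution $u(x)$ with $\bar{u}=P_2$ the transformed means equal $(1/n,\ldots,1/n)$, which is precisely the interior equilibrium of the symmetrized hypercycle on the standard simplex.

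Next, given an initial perturbation with $|\bar{\varphi}_i-\bar{u}_i|<\delta$, I would form the concerted mean-field trajectory $w_i(t)$ of \eqref{5:n:1}. Lemma \ref{l5:1n} provides the sharp comparison $f_2^{loc}(t)\le f_2(t)$, which, together with the Jensen-type estimate $\int_{\Omega}v_{i-1}^{p}\,dx\ge \bar{v}_{i-1}^{p}$ used inside its proof, allows one to sandwich the evolution of $\bar{v}_i(t)$ between two copies of \eqref{5:n:1} starting from data that are $O(\delta)$-close to $P_2$. Since in the rescaled coordinates $P_2$ corresponds to the symmetric point $(1/n,\ldots,1/n)$, the classical Hofbauer--Sigmund Lyapunov function $V(w)=\prod_{i=1}^{n}w_i$ attains its strict simplex maximum there, and a standard computation shows that $\dot V\ge 0$ in a neighbourhood of the equilibrium of the mean-field hypercycle. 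Transplanting this estimate through the sandwiching inequalities yields $|\bar{w}_i(t)-1/n|<\varepsilon'$ for all $t>0$, and inverting the substitution \eqref{5:12} returns the desired bound $|\bar{v}_i(t)-\bar{u}_i|<\varepsilon$.

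The main obstacle is that the equation for the means is not autonomous: the reaction term $\langle w_{i-1}^{p},w_i\rangle$ depends on the entire spatial profile, not on its mean alone, and a naive Jensen inequality between $\int w_{i-1}^{p}w_i$ and $\bar{w}_{i-1}^{p}\bar{w}_i$ does not hold pointwise in $t$. Bridging this gap is exactly what Lemma \ref{l5:1n} accomplishes, via integration by parts and Jensen's inequality for $x^p$; using it to produce one-sided comparisons with the mean-field system is therefore the technical heart of the argument, after which the classical hypercycle Lyapunov analysis takes over and yields the required $\varepsilon$--$\delta$ control of the mean integral values.
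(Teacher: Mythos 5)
Your proposal diverges from the paper's argument and has two genuine gaps. First, the ``sandwiching'' step is not available: Lemma \ref{l5:1n} yields only the single scalar inequality $f_2^{loc}(t)\le f_2(t)$; it says nothing about the individual correlation terms $\langle v_{i-1}^p,v_i\rangle$ versus $w_{i-1}^pw_i$ (or $\bar v_{i-1}^p\bar v_i$), which is what you would need to squeeze each mean $\bar v_i(t)$ between two solutions of \eqref{5:n:1}. Jensen gives $\int_\Omega v_{i-1}^p\,dx\ge\bar v_{i-1}^p$ but gives no sign for $\int_\Omega v_{i-1}^pv_i\,dx-\bar v_{i-1}^p\bar v_i$, and since the hypercycle system is not monotone, even a one-sided differential inequality for each component would not translate into a comparison of trajectories. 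Second, and more fatally, the ODE fact you invoke is false for general $n$: $\prod_i w_i$ is an \emph{average} Lyapunov function used to prove permanence (repulsion from the boundary of the simplex), not a local Lyapunov function at the interior equilibrium; the interior rest point of \eqref{5:n:1} with $p=1$ is unstable for $n\ge 5$ (a stable limit cycle appears --- see the Introduction and Remark 3.1), so there is no neighbourhood of $(1/n,\ldots,1/n)$ on which $\dot V\ge 0$ for the mean-field flow, and your reduction cannot deliver the $\varepsilon$--$\delta$ control of the means for $n\ge 5$, whereas the theorem is stated for all $n$.

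The paper avoids both difficulties by never comparing with the local ODE: after integrating \eqref{model:2} to obtain \eqref{5:11} and rescaling by \eqref{5:12} (Lemma \ref{l5:2}), it applies $V(\bar w)=-\ln(\bar w_1\cdots\bar w_n)-n\ln n$ directly to the non-autonomous integrated system \eqref{5:14}. There one computes
\begin{equation*}
\dot V=-\Theta^p\sum_{i=1}^n\langle w_i,w_{i-1}^p\rangle\left(\frac{1}{\bar w_i}-n\right),
\end{equation*}
and the estimate uses only the nonnegativity of the spatial correlations, bounded below by $\mu=\min_i\inf_t\langle w_i,w_{i-1}^p\rangle$, together with the arithmetic--geometric mean inequality $\sum_i 1/\bar w_i\ge n^2$ on the simplex, to conclude $\dot V\le 0$. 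The spatial terms enter only as positive weights multiplying $(1/\bar w_i-n)$, so local stability of the interior equilibrium of the mean-field hypercycle is never needed. If you want to salvage your route, you must replace the appeal to ``classical hypercycle Lyapunov analysis'' by this direct computation on \eqref{5:14}; Lemma \ref{l5:1n} plays no role in the paper's proof of this theorem.
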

\begin{proof}
Consider system \eqref{5:11}. Due to Lemma \ref{l5:2} this system is
topologically equivalent to system \eqref{5:14}, which has the
steady state $P_0=(1/n,\ldots,1/n)$. Let us introduce the following
Laypunov function
$$
V(\bar{w}_1,\ldots,\bar{w}_n)=-\ln(\bar{w}_1\bar{w}_2\ldots\bar{w}_n)-n\ln
n.
$$
It is easy to see that $V(P_0)=0$ and
$V(\bar{w}_1,\ldots,\bar{w}_n)>0$ in a neighborhood $Z_\delta$ of
$P_0$, where
$$
Z_\delta=\left\{\bar{w}_i,i=1,\dots,n\,\,:\,\,\sum_{j=1}^n\bar{w}_j=1,\quad\sum_{j=1}^n|\bar{w}_j-\frac{1}{n}|\leq
\delta\right\}.
$$
Using \eqref{5:11} yields
\begin{equation*}
    \begin{split}
\dot{V}&=-\sum_{i=1}^n\frac{\dot{\bar{w}}_i}{\bar{w}_i}=\\
&=-\Theta^p\sum_{i=1}^n \left[\frac{\langle
w_i,w_{i-1}^p\rangle}{\bar{w}_i}-f_2(t)\right]=\\
&=-\Theta^p\sum_{i=1}^n \langle
w_i,w_{i-1}^p\rangle(\frac{1}{\bar{w}_i}-n).
\end{split}
\end{equation*}

Denote $\mu$ the following
$$
\mu=\min_{1\leq i\leq n}\left\{\inf_t \langle
w_i,w_{i-1}^p\rangle\right\}.
$$
The functions $w_i(x,t)$ are nonnegative for all $i$, therefore we
obtain
$$
\dot{V} \leq
-\Theta^p\mu\left(\sum_{i=1}^n\frac{1}{\bar{w}_i}-n^2\right).
$$

We also have $\sum_{i=1}^n\frac{1}{\bar{w}_i}\geq
\frac{n}{\sqrt[n]{\prod_{i=1}^n\bar{w}_i}}$. Since
$\sum_{i=1}^n\bar{w}_i=1,\,\bar{w}_i\geq 0$, the function
$\prod_{i=1}^n\bar{w}_i$ reaches its maximum at the point
$P_0=(1/n,\ldots,1/n)$, and this implies that
$$
\frac{n}{\sqrt[n]{\prod_{i=1}^n\bar{w}_i}}\geq n^2
$$
which means that $\dot{V}\leq 0$. Invoking the arguments of the topological equivalence of \eqref{5:11} and \eqref{5:14} completes the proof.
\end{proof}
\section{Conclusion}
In this paper we studied the existence and stability of stationary solutions to autocatalytic and hypercyclic systems \eqref{model:1} and \eqref{model:2} with nonlinear growth rates and explicit spatial structure. It is well known that the mean field models (e.g., models described by ODE systems) are often show different behavior from the models where the spatial structure is taken into consideration (more on this \cite{dieckmann2000}). In particular, it is widely acknowledged that the evolution and survival of altruistic traits can be mediated by spatial heterogeneity. Macromolecules that catalyze the production of other macromolecules are obviously altruists, and in this note we tried to answer the question whether the particular form of spatial regulation (namely, global regulation \cite{bratus2006ssc,weinberger1991ssa}) can promote the coexistence of different types of macromolecules in the prebiotic world (within a hydrothermally formed system of continuous iron-sulfide compartments \cite{Koonin2005}). The analysis presented in \cite{bratus2006ssc,weinberger1991ssa} is significantly extended to the cases of nonlinear growth rates, arbitrary fitness and diffusion coefficients.

The major conclusion is as follows: the mathematical models with spatial structure and global regulation show in general very similar qualitative features to those of local models. Two basic properties, namely the competitive exclusion for autocatalytic systems and the permanence for the hypercyclic systems, are shown to hold for spatially explicit systems. Numerical calculations illustrate these conclusions in Figs. \ref{f3} and \ref{f4} (the details on the numerical scheme used in the calculations are given in \cite{bratus2006ssc}).
\begin{figure}[!tb]
\centering
\includegraphics{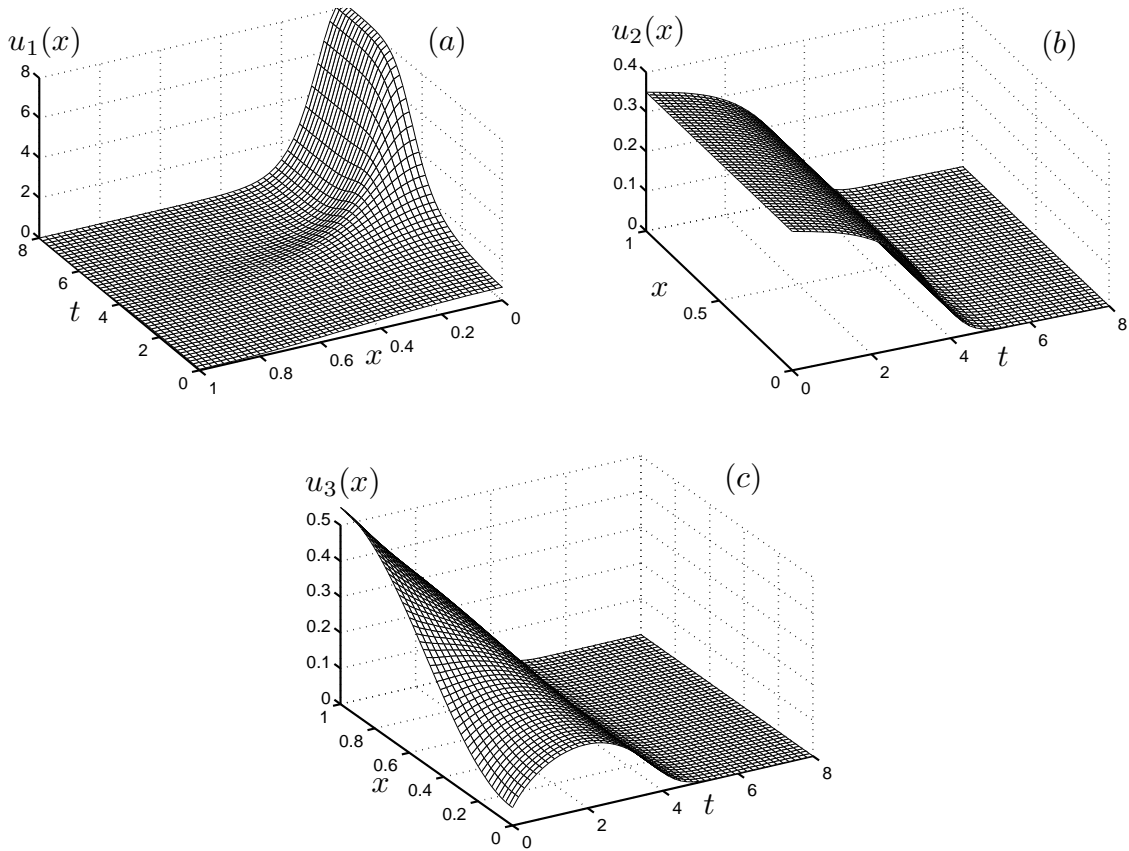}
\caption{The competitive exclusion for autocatalytic growth. Numerical solutions to autocatalytic system \eqref{model:1}. $n=3,\,d_1=0.02,\,d_2=0.05,\,d_3=0.08,\,p=1,\,a_1=a_2=a_3=1$. The initial conditions are $u_1(x,0)=0.35+0.3\cos \pi x,\,u_2(x,0)=0.35,\,u_3(x,0)=0.3-0.25\cos \pi x$. Note that the orientation of the axis is different for $(a)$ and $(b),\,(c)$. Only one type, $u_1$, survives. The asymptotic state is a spatially non-uniform stationary solution. The details of the numerical computations are given in \cite{bratus2006ssc}}\label{f3}
\end{figure}

\begin{figure}[!tb]
\includegraphics{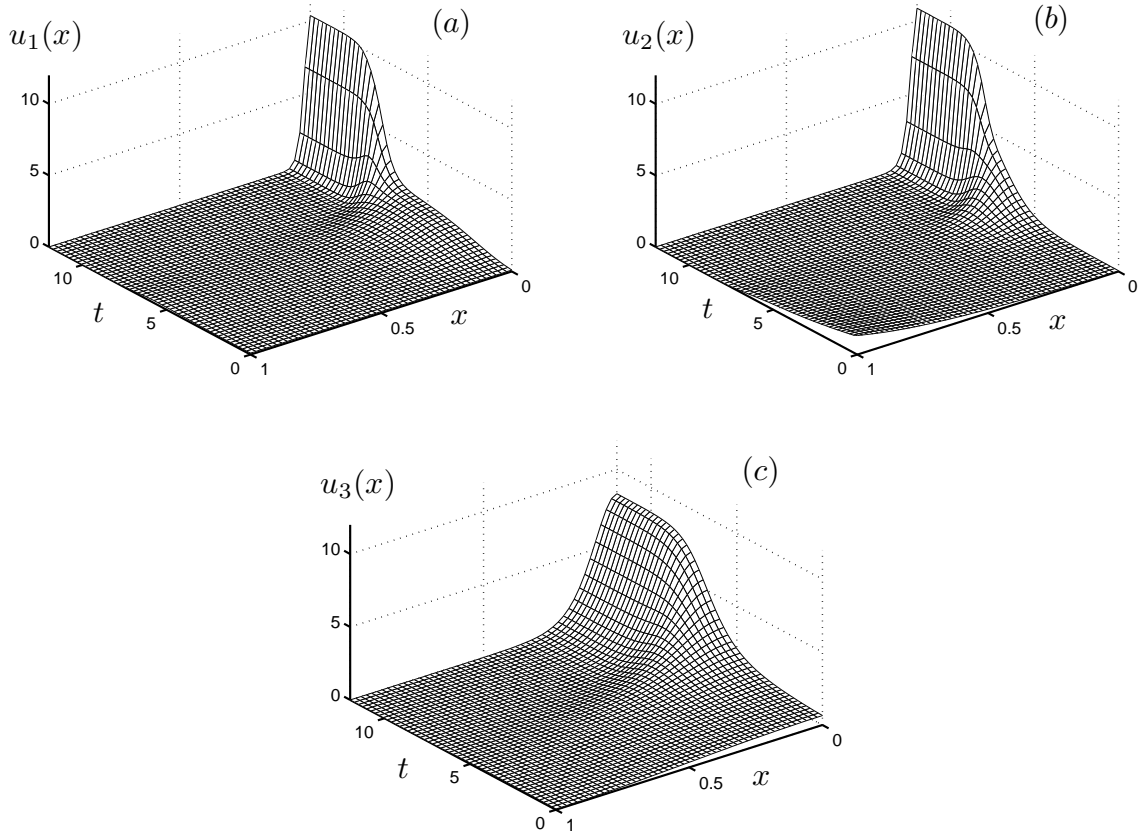}
\caption{The permanence for the hypercycle system. Numerical solutions to hypercyclic system \eqref{model:2}. $n=3,\,d_1=0.001,\,d_2=0.002,\,d_3=0.003,\,p=1,\,a_1=a_2=a_3=1$. The initial conditions are $u_1(x,0)=0.35+0.15\cos \pi x,\,u_2(x,0)=0.357,\,u_3(x,0)=0.338-0.3\cos \pi x$. The asymptotic state is spatially non-uniform stationary solutions. The details of the numerical computations are given in \cite{bratus2006ssc}}\label{f4}
\end{figure}

More precisely, for sufficiently large diffusion coefficients the spatially uniform stationary solutions to \eqref{model:1} and \eqref{model:2} have the same character as in the local models \eqref{intro:0} and \eqref{intro:1}. For such diffusion coefficients the asymptotic behavior of the local and distributed models coincides. If, on the other hand, the inequality \eqref{4:1} holds and the nonlinear growth rates satisfy the condition $0<p\leq 2$ then new, spatially non-uniform solutions appear; for small diffusion coefficients these spatially heterogeneous solutions can correspond to the multiple cycles on the phase plane of the corresponding Hamiltonian system (Fig. \ref{f2}). In the case of autocatalytic system these solution can be stable only if all but one asymptotic state are zero. In the case of the hypercyclic system we prove that these spatially heterogeneous solutions can be stable in the sense of the mean integral value. The examples of the asymptotic states for a hypercyclic systems found numerically are shown in Fig. \ref{f5}. These non-uniform stationary solutions can be considered as the means of the hypercycle system to withstand the parasite invasion \cite{smith1979hao} (the analysis of models with parasites and with $p>2$ is the subject of the ongoing work).

\begin{figure}[!tb]
\includegraphics{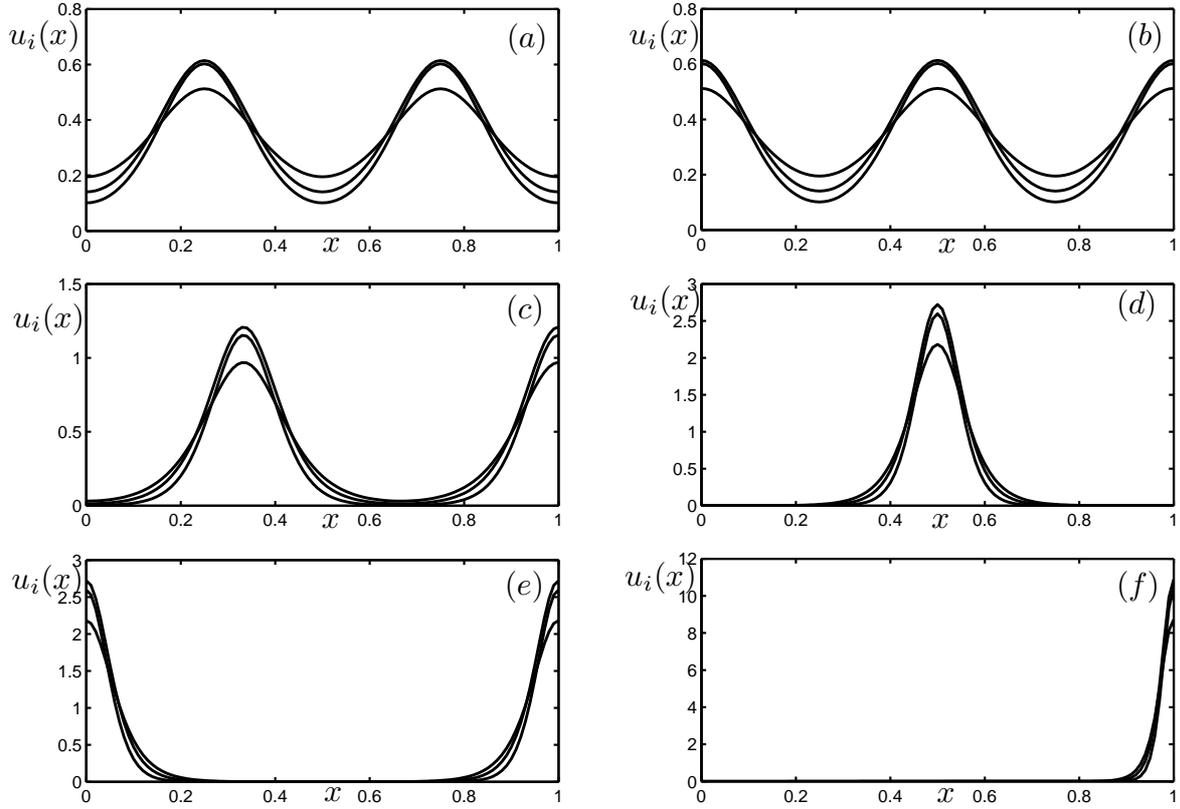}
\caption{Asymptotic spatially heterogeneous states of the hypercycle system \eqref{model:1:2} found numerically. $n=3,\,d_1=0.001,\,d_2=0.002,\,d_3=0.003,\,p=1,\,a_1=a_2=a_3=1$. Note that case $(f)$ corresponds to the simulation shown in Fig. \ref{f4}.}\label{f5}
\end{figure}

\appendix
\section{Appendix}
\begin{proof}[Proof of Lemma \ref{l4:1}] Consider the function
$g(t)=t^2-t^{p+2},\,p>0$. This function has two roots $\tau_1=0$ and
$\tau_2=1$, and attains its maximum at
$t^*=\left[\frac{2}{2+p}\right]^{\frac{2}{p}}$, which is
$g(t^*)=\left[\frac{2}{2+p}\right]^{\frac{2}{p}+1}$. Function
$P_i(t)$ can be obtained from $g(t)$ by shifting the latter.
Therefore, when \eqref{4:5} holds, $P_i(t)$ has two positive roots
that are situated in the interval \eqref{4:6}.
\end{proof}
\begin{proof}[Proof of Lemma \ref{l4:2}]
To simplify notations we drop indexes where it is possible. We need
to prove that for
\begin{equation*}
    P(\tau,c)=c+\tau^2-\tau^{p+2}
\end{equation*}
and
\begin{equation}\label{a:1}
I_1(c)=\int_{\tau_1}^{\tau_2}\frac{d\tau}{\sqrt{P(\tau,c)}},\quad
I_2(c)=\int_{\tau_1}^{\tau_2}\frac{\tau\,d\tau}{\sqrt{P(\tau,c)}},
\end{equation}
where
$$
\tau_1\in(0,\tau_0),\quad \tau_2\in(\tau_0,1),\quad
\tau_0=\left[\frac{2}{2+p}\right]^\frac{1}{p},\quad P(\tau_1,c)=0,\quad P(\tau_2,c)=0,\quad P'_{\tau}(\tau_0,c)=0,
$$
we have that
\begin{equation}\label{a:3}
    I(c)=\frac{1}{\tau_0}(I_1(c))^{\frac{2}{p}-1}I_2(c)\geq\left[\frac{\pi^2}{p}\right]^\frac{1}{p}
\end{equation}
for $0<p\leq2$.

For $p=2$ direct calculations show that $I_2(c)=\frac{\pi}{2},\,I(c)=\frac{\pi}{\sqrt{2}}$, hence
we assume that $0<p<2$. Using H\"{o}lder's inequality  yields
$$
(I_1)^{\frac{2}{p}-1}I_2=\left[(I_1)^{1-\frac{p}{2}}(I_2)^\frac{p}{2}\right]^\frac{2}{p}\geq
(I_3)^\frac{p}{2},
$$
where
$$
I_3(c)=\int_{\tau_1}^{\tau_2}\left(\frac{1}{\sqrt{P(\tau,c)}}\right)^{1-\frac{p}{2}}
\left(\frac{\tau}{\sqrt{P(\tau,c)}}\right)^{\frac{p}{2}}\,d\tau=\int_{\tau_1}^{\tau_2}\frac{\tau^\frac{p}{2}}{\sqrt{P(\tau,c)}}\,d\tau
$$

Next we will the following change of the variables:
$$
\tau^\frac{p+2}{2}=t,\quad \tau_1^\frac{p+2}{2}=t_1,\quad
\tau_2^\frac{p+2}{2}=t_2,\quad\frac{4}{p+2}=q,\quad
Q(t,c)=c+Q_0(t),\quad Q_0(t)=t^q-t^2,\quad 1<q<2,
$$
from which $Q(t_1,c)=Q(t_2,c)=0$, and hence $c=-Q_0(t_1)=-Q_0(t_2)$.

Out integral takes the form
$$
I_3(c)=\frac{q}{2}\int_{t_1}^{t_2}\frac{1}{\sqrt{Q(t,c)}}\,dt,
$$
$$
t_1\in(0,t_0),\quad t_2\in(t_0,1),\quad
t_0=\left(\frac{q}{2}\right)^\frac{1}{2-q},\quad Q_0'(t_0)=0.
$$

Function $Q(t,c)$ does not exceed its Hermite interpolation
polynomial $H_3$, which is build using the values at
$H_3(t_1)=H_3(t_2)=H_3'(t_0)=0,\,H_3(t_0)=Q(t_0,c)$. This follows from non-negativity of
the reminder term of interpolation
$$
Q(t,c)-H_3(t)=\frac{Q_0^{(4)}(\xi)}{24}(t-t_0)^2(t-t_1)(t-t_2),
$$
and the fact that $Q_0^{(4)}(\xi)>0$ when $t_1<\xi<t_2$. Therefore,
we have
$$
I_3(c)>\frac{q}{2}\int_{t_1}^{t_2}\frac{1}{\sqrt{H_3(t)}}\,dt,
$$
where
$$
H_3(t)=Q(t_0,c)\left(1-\frac{(t-t_0)(2t_0-t_1-t_2)}{(t_0-t_1)(t_0-t_2)}\right)\frac{(t-t_1)(t-t_2)}{(t_0-t_1)(t_0-t_2)}.
$$
Making the change of the variable in the integral
$$
t=\frac{t_1+t_2}{2}+\frac{t_2-t_1}{2}\sin\varphi,
$$
we obtain
$$
\int_{t_1}^{t_2}\frac{1}{\sqrt{H_3(t)}}\,dt=\sqrt{\frac{(t_0-t_1)(t_2-t_0)}{Q(t_0,c)}}I_4(c),
$$
where
$$
I_4(c)=\int_{-\pi/2}^{\pi/2}\frac{d\varphi}{\sqrt{1-\frac{((t_1+t_2)/2-t_0+(t_2-t_1)/2\sin\varphi)(2t_0-t_1-t_2)}{(t_0-t_1)(t_0-t_2)}}}.
$$
Since the graph of any convex function lays above any tangent line,
then we have
$$
\frac{1}{\sqrt{c_1x+c_2}}\geq\frac{1}{\sqrt{c_2}}\left(1-\frac{c_1x}{2c_2}\right)
$$
for any $x$. Using the last inequality we can estimate $I_4$ as
$$
I_4>
\int_{-\pi/2}^{\pi/2}\frac{1}{\sqrt{c_2}}\left(1-\frac{c_1\sin\varphi}{2c_2}\right)\,d\varphi=\frac{\pi}{\sqrt{c_2}}=\frac{\pi}{\sqrt{1-\frac{(2t_0-t_1-t_2)^2}{2(t_0-t_1)(t_2-t_0)}}}>\pi.
$$
Using the last estimate and returning to $I_3$ we obtain that
$$
I_3(c)>\frac{q\pi}{2}\sqrt{\frac{(t_0-t_1)(t_2-t_0)}{Q(t_0,c)}}=\frac{q\pi}{2}\sqrt[4]{g(t_1)g(t_2)},
$$
where
$$
g(t)=\frac{(t-t_0)^2}{Q_0(t_0)-Q_0(t)}\,.
$$

With the help of the Taylor formula the denominator in $g(t)$ can be
presented in the following form:
$$
Q_0(t_0)-Q_0(t)=-Q_0''(t_0)(t-t_0)^2/2-Q_0'''(t_0)(t-t_0)^3/6-Q_0^{(4)}(\zeta)(t-t_0)^4/24,
$$
where $\zeta$ belongs to the interval $(t,\,t_0)$.
If we denote $c_3=Q_0'''(t_0)/6$, we obtain
$$
g(t_1)g(t_2)>\frac{1}{(2-q-c_3(t_1-t_0))(2-q-c_3(t_2-t_0))}
$$
Denominator of this fraction
$$
(2-q)^2+c_3(2-q)(2t_0-t_1-t_2)+c_3^2(t_1-t_0)(t_2-t_0)
$$
has its fist term positive and its second and third terms negative.
Indeed, we have $Q_0(t_1)=Q_0(t_2)$, and, using the Taylor formula
around $t=t_0$ for both parts of this equality, we obtain
$$
(q-2)(t_1-t_0)^2+Q_0'''(\xi_1)(t_1-t_0)^3/6=(q-2)(t_2-t_0)^2+Q_0'''(\xi_2)(t_2-t_0)^3/6,
$$
where $\xi_1\in(t_1,t_0)$ and $\xi_2\in(t_0,t_2)$. Then
$$
(q-2)((t_1-t_0)^2-(t_2-t_0)^2)=Q_0'''(\xi_2)(t_2-t_0)^3/6-Q_0'''(\xi_1)(t_1-t_0)^3/6\leq
0,
$$
since $Q_0'''(t)<0$ for any $t$. Which implies that
$(t_1-t_2)(t_1+t_2-2t_0)$ from which follows that the second term is
negative. Using this fact we obtain
$$
g(t_1)g(t_2)>\frac{1}{(2-q)^2},\quad
I_3>\pi\frac{q}{2\sqrt{2-q}}=\frac{\pi}{\sqrt{p}}\sqrt{\frac{2}{p+2}}\,,
$$
$$
I(c)\geq
\frac{1}{\tau_0}(I_3)^\frac{2}{p}>\frac{1}{\tau_0}\left(\frac{\pi^2}{p}\right)^\frac{1}{p}\left(\frac{2}{p+2}\right)^\frac{1}{p}
$$
which completes the proof.
\end{proof}

\paragraph{Acknowledgments.} The authors are grateful to Dr. Yu. Semenov for the help with the proof of Lemma \ref{l4:2}. The research of ASN is supported by the Department of Health and Human Services intramural program (NIH, National Library of Medicine).
%\bibliography{prebiotic}

\end{document}